\pgfplotsset{compat=1.12}
\newcommand{\abs}[1]{\lvert #1 \rvert}
\newcommand{\Fr}{\text{Fr}}
\newcommand{\Ro}{\text{Ro}}
\newcommand{\norm}[1]{\| #1 \|}
\newcommand{\ip}[1]{\langle #1 \rangle}
\def\bfu{{\bf u}}
\renewcommand{\Re}{\operatorname{Re}}
\renewcommand{\Im}{\operatorname{Im}}
\newtheorem{lemma}{Lemma}[section]
\newtheorem{theorem}{Theorem}[section]
\newtheorem*{remark}{Remark}
\numberwithin{equation}{section}
\numberwithin{figure}{section}
\begin{document}
\title[Dynamic Transitions and Baroclinic Instability]{Dynamic Transitions and Baroclinic Instability for 3D Continuously Stratified Boussinesq Flows}

\author[Şengül]{Taylan Şengül}
\address[Şengül]{Department of Mathematics, Marmara University, 34722 Istanbul, Turkey}
\email{taylan.sengul@marmara.edu.tr}

\author[Wang]{Shouhong Wang}
\address[Wang]{Department of Mathematics,
Indiana University, Bloomington, IN 47405}
\email{showang@indiana.edu, http://www.indiana.edu/\~{ }fluid}
\thanks{This research is supported in part by the National Science Foundation (NSF) grant DMS-1515024, and by the Office of Naval Research (ONR) grant N00014-15-1-2662.}

\begin{abstract}
The main objective of this article is to study the nonlinear stability and dynamic transitions of the basic (zonal) shear flows for the three-dimensional continuously stratified rotating Boussinesq model. The model equations are fundamental equations in geophysical fluid dynamics, and dynamics associated with their basic zonal shear flows play a crucial role in understanding many important geophysical fluid dynamical processes, such as the meridional overturning oceanic circulation and the geophysical baroclinic instability.
In this paper, first we derive a threshold for the energy stability of the basic shear flow, and obtain a criterion for local nonlinear stability in terms of the critical horizontal wavenumbers and the system parameters such as the Froude number, the Rossby number, the Prandtl number and the strength of the shear flow.
Next, we demonstrate that the system always undergoes a dynamic transition from the basic shear flow to either a spatiotemporal oscillatory pattern or circle of steady states,
as the shear strength of the basic flow crosses a critical threshold. Also, we show that the dynamic transition can be either continuous or catastrophic, and is dictated by the sign of a transition number, fully characterizing the nonlinear interactions of different modes.  Both the critical shear strength and the transition number are functions of the system parameters. A systematic numerical method is carried out to explore transition in different flow parameter regimes. In particular, our numerical investigations show the existence of a hypersurface which separates the parameter space into regions where the basic shear flow is stable and unstable. Numerical investigations also yield that the selection of horizontal wave indices is determined only by the aspect ratio of the box. We find that the system admits only critical eigenmodes with roll patterns aligned with the x-axis. Furthermore, numerically we encountered continuous transitions to multiple steady states, as well as continuous and catastrophic transitions to spatiotemporal oscillations.
\end{abstract}
\keywords{baroclinic instability, shear flow instability, continuously stratified Boussinesq flows, dynamic transition, center manifold reduction, continuous transition, catastrophic transition, random transition}
\date{May 9, 2017}
\maketitle

\section{Introduction and Main Results}
The atmosphere and the oceans are extremely rich in their organization and complexity, and many phenomena that they exhibit, involving a broad range of temporal and spatial scales \cite{charney48}, cannot be reproduced in the laboratory.
The behavior of the atmosphere, the ocean, or the coupled atmosphere and ocean can be viewed as an initial and boundary value problem \cite{bjerknes, rossby26, phillips}.
According to \cite{neumann}, depending on the time scale of the phenomenon under consideration, or equivalently, on how far into the future one is attempting to make predictions of atmospheric conditions, it is convenient to consider motions corresponding to short-, medium-, and long-term behavior of the atmosphere.

The ideas of dynamical systems theory were pioneered in \cite{La, Lb,S,Va, Vb, GC, DijkstraB2000, DG05},
among many others, whose authors explored the dynamics of low-order
models of atmospheric and oceanic flows. The formulation of the primitive equations of atmosphere and ocean as an infinite-dimensional dissipative dynamical system were initiated in \cite{LTW92a}.
It is evident that geophysical fluid flows and climate variability exhibit recurrent large-scale patterns which are directly linked to dynamical processes represented in the governing dissipative dynamical system of the atmosphere and the ocean. The study of the persistence of these patterns and the transitions between them are of fundamental importance in geophysical fluid dynamics and climate dynamics.

The main objective of this paper is to study the stability and dynamic transitions of a basic shear flow, associated with geophysical baroclinic instability, for the three dimensional (3D) continuously stratified rotating Boussinesq model. Baroclinic instability is one of the most important geophysical fluid dynamical instability and plays a crucial role in understanding the dominant mechanism shaping the cyclones and anticyclones that dominate the weather in mid-latitudes, as well as the mesoscale eddies that play various roles in oceanic dynamics and the transport of tracers.

The first analytic study of the problem was made by Eric Eady \cite{eady49}, Joseph Pedlosky \cite{ped1970} in the context of a single unstable baroclinic wave, followed by the work of Mak \cite{mak1985}, Cai and Mak \cite{cai87} and Cai \cite{cai92}, among many others. In view of all the previous work for this problem and to the best of our knowledge, the study of nonlinear stability and dynamic transitions of the basic shear flows, however, is still open.

Also, the previous studies on baroclinic instability are based on simplified models such as the
3D quasi-geostrophic model and multi-layer (in vertical) models. The model we adopt in this article is the 3D rotating continuously stratified Boussinesq equations, which are fundamental equations in geophysical fluid dynamics. They describe the conservation of mass, the conservation of momentum, coupled with the first law of thermodynamics, under the Boussinesq assumption that the density is constant except in the buoyancy term and in the equation of state.

The basic shear flow, in its nondimensional form, is given by \eqref{basic flow}:
\[
  \bfu_{ss} = (\Lambda z, 0, 0), \quad
  p_{ss} = - \frac{\Lambda}{\Ro} yz \quad
  \rho_{ss} = \frac{\Fr^2 \Lambda}{\Ro} y,
\]
where $\bfu_{ss}$ is the vertically dependent zonal shear velocity, $p_{ss}$ and $\rho_{ss}$ are the corresponding pressure and density fields. Here $\Fr$ is the Froude number, $\Lambda$ is the (non-dimensional) strength of the basic shear flow, $\Pr$ is the Prandtl number and $\Ro$ is the Rossby number. For simplicity of the analysis and presentation, we use periodicity in the horizontal directions coupled with no-slip boundary conditions in the vertical direction for the deviation fields.

\medskip

We now describe the main results of this paper.

First, we find a threshold for the energy stability of the system. We prove that if
$$\Lambda < \Fr^4 \Ro \Pr \pi^4,$$
then for any initial data, the energy of the deviation field decays in the $L^2$-norm at least exponentially in time. This result suggests that increasing any of the parameters $\Fr$, $\Ro$, $\Pr$ has a stabilizing effect whereas increasing $\Lambda$ has a destabilizing effect on the basic shear flow.

Next we turn our attention to the linear stability problem. Due to the periodicity of the system in the horizontal directions, the determination of the eigenpairs of the linear operator is reduced to the study of the eigenvalue problem corresponding to the z-direction of the eigenvectors. The resulting equations for non-zero horizontal wave numbers $\alpha$ have to be solved numerically. We use a Legendre-Galerkin method to solve these equations which provides a very efficient method for the determination of the eigenpairs.

Studying the eigenvalue problem analytically, we show that if the wavenumber $\alpha_c$ corresponding to the eigenvalue with largest real part satisfies
\[
  \alpha_c^2 > \left( \frac{1}{\Fr^4} + \frac{\sqrt{2}}{\Ro} \right) \max \left\{\left(\frac{\Lambda^2}{\Pr}\right)^{1/3}, \frac{1}{\Pr} \right\}
\]
then the system is locally nonlinearly stable.

Our numerical experiments with the linearized eigenvalue problem show the existence of a critical parameter $\Lambda_c$ for which the eigenvalue $\beta_1$ with the largest real part, the critical eigenvalue, crosses the imaginary axis at $\Lambda = \Lambda_c$. Since we are not able to find an analytical expression for $\Lambda_c$ in terms of other system parameters, we are led to numerical explorations of this critical parameter.

One of the main issues in this paper is the identification of the first transition which takes place at $\Lambda = \Lambda_c$.
The first transition of the basic shear flow depends on the dimension of the critical eigenspace. Due to the periodicity condition in the horizontal directions, the first critical eigenspace spanned by the critical eigenmodes is always even dimensional, generically two. Thus we study the critical crossing of double real eigenvalues and a pair of complex conjugate eigenvalues separately.

The critical crossing of double real eigenvalues lead to a first transition of the basic shear flow to multiple steady states while the critical crossing of complex pair of eigenvalues lead to a first transition to a spatial-temporal oscillations (time periodic solutions). In both cases, as the shear strength $\Lambda$ crosses the critical value $\Lambda_c$, the system undergoes a continuous transition or catastrophic transition from the basic shear flow, dictated by a common parameter $A$, which characterizes precisely the interactions between different modes of the system.

In the case of transition to multiple steady states, we show the existence of a circle of degenerate steady states bifurcating at $\Lambda=\Lambda_c$.

We also prove a new representation for the approximation of the center manifold function for the critical crossing of complex conjugate pair of eigenvalues which is suitable for studying problems which are extended periodically in at least one spatial direction. This representation leads to the determination of a transition number which describes the critical crossing of both double real eigenvalues and a pair of complex conjugate eigenvalues together.

We show that the type of transition is determined by a parameter $A$ which characterizes the nonlinear interactions between critical modes with modes having zero wave number and modes having twice the critical wave number. In our numerical investigations we encounter both continuous and catastrophic transitions to
spatiotemporal oscillations and continuous transitions to multiple steady states.

With regards to the pattern formation after the transition, our numerical experiments suggest that the system always prefers rolls parallel to the east-west direction regardless of the east west length scale ratio.

This paper arises out of a research program to generate rigorous mathematical results on geophysical fluid dynamics and on climate variability developed from the viewpoint of dynamical transitions. This type of physically-induced stability and transition leads naturally for us to search for the full set of transition states, represented by a local attractor, giving a complete characterization of stability and transition. Such study was initiated in early 2000 by Ma and Wang, and the corresponding dynamical transition theory and its various applications are synthesized in \cite{ptd}; we have shown in particular that the transitions of all dissipative systems can be classified into three classes: continuous, catastrophic and random, which correspond to very different dynamical transition behavior of the system.
We refer interested readers to \cite{ptd, b-book, MW09c} and the references therein for a wide range of applications of the theory, as well as the recent work \cite{sengul-qgflow,Ozer201671} for dynamic transition of shear flows.

The paper is organized as follows. Section~2 introduces the model and its mathematical setting. Section~3 proves a nonlinear energy stability of the basic shear flow, and Section 4 characterizes linear instability and the principle of exchange of stabilities (PES). A local nonlinear stability result is proved in Section~5 under a condition on the horizontal critical wave numbers. The main dynamic transition theorem, \autoref{main thm}, of the paper is stated in Section~6 and proved in Section~7. Sections~8 and 9 are devoted to numerical investigations of the dynamic transitions of the basic shear flow on various parameter regimes, and Section 10 provides a brief conclusion of the main results in this paper.

\section{Model Equations}
The dimensional equations describing the baroclinicity for continuously stratified Boussinesq flow is given by
\begin{equation} \label{dim main equs}
  \begin{aligned}
    & \bfu_t + (\bfu \cdot \nabla) \bfu - \nu \Delta \bfu + \omega \hat{e}_3 \times \bfu + \frac{1}{\rho_0} \nabla p + \frac{1}{\rho_0} \rho g \hat{e}_3 = 0,\\
    & \rho_t + (\bfu \cdot \nabla) \rho - \kappa \Delta \rho = 0, \\
    & \nabla \cdot \bfu = 0.
  \end{aligned}
\end{equation}
Here $\bfu$ is the velocity field, $p$ is the pressure, $\rho$ is the density (temperature), $\nu$ is the kinematic viscosity, $\omega$ is the planetary rotation rate, $\hat{e}_3$ is the unit vector in the $z$ direction, $\rho_0$ is a reference density, $g$ is the gravitation constant, and $\kappa$ is the thermal diffusivity. It is assumed that the fluid resides in a container with dimensional height $h$.

We nondimensionalize the equations \eqref{dim main equs} using
\[
  x = h x', \quad t = \frac{h^2}{\kappa} t', \quad
  \bfu = \frac{\kappa}{h} \bfu', \quad \rho = \rho_0 \rho',
\]
and drop primes to obtain
\begin{equation} \label{pre-main equs}
  \begin{aligned}
    & \bfu_t + (\bfu \cdot \nabla) \bfu - \Pr \Delta \bfu + \frac{1}{\Ro}\hat{e}_3 \times \bfu + \nabla p + \frac{1}{\Fr^2}\rho \hat{e}_3 = 0, \\
    & \rho_t + (\bfu \cdot \nabla) \rho - \Delta\rho = 0, \\
    & \nabla \cdot \bfu = 0,
  \end{aligned}
\end{equation}
where the nondimensional numbers are defined as
\[
  \Pr = \frac{\nu}{\kappa}, \quad
  \Ro = \frac{[U]}{[L]\omega} = \frac{\kappa}{h^2 \omega}, \quad
  \Fr^2 = \frac{[U]^2}{[L]g} = \frac{\kappa^2}{g h^3}.
\]

The equations \eqref{pre-main equs} admit the following steady state solution characterizing a shearing motion which, due to the Coriolis forces, must be balanced by a pressure gradient in response to spatially varying density field:
\begin{equation} \label{basic flow}
  \bfu_{ss} = (\Lambda z, 0, 0), \quad
  p_{ss} = - \frac{\Lambda}{\Ro} yz, \quad
  \rho_{ss} = \frac{\Fr^2 \Lambda}{\Ro} y.
\end{equation}

Taking the deviations from the basic state
\[
  \bfu' = \bfu - \bfu_{ss}, \quad
  p' = p - p_{ss}, \quad
  \rho' = \rho - \rho_{ss},
\]
and dropping the primes again, we deduce the following equations for the deviation fields:
\begin{equation} \label{main equs}
  \begin{aligned}
    & \bfu_t + (\bfu \cdot \nabla) \bfu + \Lambda z \frac{\partial \bfu}{\partial x} + \Lambda w \hat{e}_3 - \Pr \Delta \bfu + \frac{1}{\Ro}\hat{e}_3 \times \bfu + \nabla p + \frac{1}{\Fr^2}\rho \hat{e}_3 = 0, \\
    & \rho_t + (\bfu \cdot \nabla) \rho + \frac{\Lambda}{\Ro \Fr^2}v - \Delta\rho = 0,\\
    & \nabla \cdot \bfu = 0,
  \end{aligned}
\end{equation}
where $\bfu = (u, v , w)$.
The nondimensional spatial rectangular domain is
\[
  (x, y, z) \in \Omega = (0,L_x) \times (0, L_y) \times (0, 1).
\]

In this study, we consider the periodic boundary conditions for the perturbed variables $(\bfu, p, \rho)$ in \eqref{main equs} in the $x$ and $y$ directions with periods $L_x$ and $L_y$ for simplicity. In the z direction we consider no-slip boundary conditions:
\begin{equation} \label{BC}
  u = v = w = \rho = 0, \qquad \text{at } z=0,1.
\end{equation}
We also want to give a remark for the case of free-slip boundary conditions, i.e. the Neumann boundary conditions for the horizontal velocities in the vertical direction. In the case of free-slip boundaries, the linear problem will always have a pair of eigenvalues with zero real part. In this case the reduction to the center manifold can be carried out by restricting the analysis to space of functions with zero mean.

The functional setting of the problem can be cast as follows. Let
\[
  X_1 =\{ (\bfu, \rho) \in H^2(\Omega, \mathbb{R}^4) : \nabla \cdot \bfu = 0, \, \bfu\mid_{z=0,1} =0, \, \rho\mid_{z=0,1} = 0 \}
\]
and
\[
  X =\{ (\bfu, \rho) \in L^2(\Omega, \mathbb{R}^4) : \nabla \cdot \bfu = 0, \, \bfu\mid_{z=0,1} =0, \, \rho\mid_{z=0,1} = 0 \}
\]
where the spaces denote the usual Sobolev spaces such that $\bfu$, $\rho$ are periodic in $x$ and $y$ with periods $L_x$ and $L_y$.

Now the main equations can be cast as
\begin{equation} \label{abstract equ}
  \frac{d \phi}{dt} = L \phi + G(\phi), \qquad {\phi=(\bfu, \rho)},
\end{equation}
where $L:X_1 \to X$ is the linear operator and $G:X_1\times X_1 \to X$ is the bilinear operator, defined by
\begin{equation}\label{operators}
\begin{aligned}
&  L\phi = -{\mathcal P} \left(\begin{matrix}
      \Lambda z \frac{\partial \bfu}{\partial x} + \Lambda w \hat{e}_3 - \Pr \Delta \bfu + \frac{1}{\Ro}\hat{e}_3 \times \bfu  + \frac{1}{\Fr^2}\rho \hat{e}_3 \\
     \frac{\Lambda}{\Ro \Fr^2}v - \Delta\rho
     \end{matrix}
     \right), \\
  &    G(\phi, \tilde \phi) = -{\mathcal P} \left(\begin{matrix}
  (\bfu \cdot \nabla) \tilde{\bfu} \\
  (\bfu \cdot \nabla) \tilde{\rho}
  \end{matrix}
  \right), \\
  & G(\phi)= G(\phi, \phi).
\end{aligned}
\end{equation}
Here $\mathcal P: L^2(\Omega, \mathbb R^4) \to X$ is the standard Leray projection.

\section{Energy Stability}
In this section, we address the energy stability of the basic shear flow.
\begin{theorem} \label{Energy stability theorem}
  If
  \begin{equation} \label{energy-stability condition}
    \Lambda < \text{\rm Fr}^4 \text{\rm Ro}  \Pr \pi^4,
  \end{equation}
  then the energy
  \begin{equation} \label{energy}
    E = \norm{\bfu}^2 + \frac{\text{\rm Ro}}{\Lambda} \norm{\rho}^2
  \end{equation}
  decays in the $L^2(\Omega)$ norm $\norm{\cdot}$ at least exponentially in time.
\end{theorem}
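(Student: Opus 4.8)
The plan is to run the standard $L^2$ energy method directly on the deviation system \eqref{main equs}, choosing the weight in the density equation so that the identity reproduces the energy \eqref{energy}. Concretely, I would take the $L^2(\Omega)$ inner product of the momentum equation with $\bfu$ and of the density equation with $\tfrac{\Ro}{\Lambda}\rho$, and add the two resulting identities. The time-derivative terms then combine into $\tfrac12\tfrac{d}{dt}\bigl(\norm{\bfu}^2+\tfrac{\Ro}{\Lambda}\norm{\rho}^2\bigr)=\tfrac12\tfrac{d}{dt}E$, and the diffusion terms produce $\Pr\norm{\nabla\bfu}^2+\tfrac{\Ro}{\Lambda}\norm{\nabla\rho}^2$ after integration by parts.

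The second step is to verify that every remaining term, apart from the buoyancy coupling, is either zero or of favorable sign. The quadratic advection terms $\ip{(\bfu\cdot\nabla)\bfu,\bfu}$ and $\ip{(\bfu\cdot\nabla)\rho,\rho}$ vanish after integration by parts, using $\nabla\cdot\bfu=0$, periodicity in $x$ and $y$, and $\bfu|_{z=0,1}=\rho|_{z=0,1}=0$; the pressure term $\ip{\nabla p,\bfu}$ vanishes for the same reasons (it is also killed outright by the Leray projection); and the Coriolis term vanishes pointwise because $\hat e_3\times\bfu$ is orthogonal to $\bfu$. The mean-shear transport term deserves one line: $\ip{\Lambda z\,\partial_x\bfu,\bfu}=\tfrac{\Lambda}{2}\int_\Omega z\,\partial_x\abs{\bfu}^2=0$, since $z$ does not depend on $x$ and the $x$-integral of an $x$-derivative vanishes by periodicity (and similarly for the mean-flow transport in the density equation). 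Finally $\Lambda\ip{w\hat e_3,\bfu}=\Lambda\norm{w}^2\ge 0$, so it moves to the dissipative side and may simply be discarded. Using $\tfrac{\Ro}{\Lambda}\cdot\tfrac{\Lambda}{\Ro\Fr^2}=\tfrac{1}{\Fr^2}$, what is left is
\[
  \tfrac12\tfrac{d}{dt}E+\Pr\norm{\nabla\bfu}^2+\tfrac{\Ro}{\Lambda}\norm{\nabla\rho}^2 \;\le\; -\tfrac{1}{\Fr^2}\ip{w,\rho}-\tfrac{1}{\Fr^2}\ip{v,\rho}.
\]

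The third step closes the estimate with the Poincaré and Young inequalities. Since $\bfu$ and $\rho$ vanish at $z=0,1$, the one-dimensional Dirichlet Poincaré inequality on $(0,1)$ gives $\norm{\nabla\bfu}^2\ge\pi^2\norm{\bfu}^2$ and $\norm{\nabla\rho}^2\ge\pi^2\norm{\rho}^2$ (the first eigenvalue of $-\partial_z^2$ being $\pi^2$), while $\abs{\ip{w,\rho}},\abs{\ip{v,\rho}}\le\norm{\bfu}\,\norm{\rho}$; hence
\[
  \tfrac12\tfrac{d}{dt}E\;\le\;-\Pr\pi^2\norm{\bfu}^2-\tfrac{\Ro\pi^2}{\Lambda}\norm{\rho}^2+\tfrac{2}{\Fr^2}\norm{\bfu}\,\norm{\rho}.
\]
Splitting $\tfrac{2}{\Fr^2}\norm{\bfu}\,\norm{\rho}\le\tfrac{\theta}{\Fr^2}\norm{\bfu}^2+\tfrac{1}{\theta\Fr^2}\norm{\rho}^2$ and requiring both resulting coefficients to be strictly positive forces $\tfrac{\Lambda}{\Ro\pi^2\Fr^2}<\theta<\Pr\pi^2\Fr^2$, and such a $\theta$ exists exactly when $\Lambda<\Fr^4\Ro\Pr\pi^4$, which is precisely hypothesis \eqref{energy-stability condition}. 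Fixing any admissible $\theta$, one obtains $\tfrac{d}{dt}E\le-2\delta E$ with $\delta=\min\bigl\{\Pr\pi^2-\tfrac{\theta}{\Fr^2},\ \pi^2-\tfrac{\Lambda}{\Ro\theta\Fr^2}\bigr\}>0$, and Grönwall's inequality then yields $E(t)\le E(0)e^{-2\delta t}$, i.e. at least exponential decay.

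None of these steps is deep, so the real work is bookkeeping: making sure the mean-shear transport, pressure, and Coriolis terms genuinely drop (which relies on the exact boundary conditions and on $z$ being $x$-independent), recognizing that the $\Lambda w\hat e_3$ term is sign-definite, and tuning the Young parameter so that the threshold comes out sharply as $\Fr^4\Ro\Pr\pi^4$ rather than a lossier constant. One should also keep in mind that the weighted energy $E$ tacitly requires $\Lambda,\Ro,\Pr,\Fr>0$.
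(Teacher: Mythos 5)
Your proposal is correct and follows essentially the same route as the paper: the weighted $L^2$ energy identity with weight $\tfrac{\Ro}{\Lambda}$ on the density equation, vanishing of the transport, pressure, and Coriolis terms, discarding the sign-definite $\Lambda\norm{w}^2$ term, the Dirichlet Poincar\'e inequality with constant $\pi^2$, a Young-inequality split of the buoyancy coupling, and Gr\"onwall. The only cosmetic difference is that you keep the Young parameter $\theta$ free and show an admissible choice exists precisely under \eqref{energy-stability condition}, whereas the paper fixes $\epsilon=\tfrac{\Fr^2\Ro\pi^2}{2\Lambda}$ outright; both yield the same threshold.
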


\begin{proof}
Taking $L^2$ inner product between the first two equations of \eqref{main equs} and ($\bfu, \rho$), we obtain
\begin{align}
&  \label{energy-equ1}
 \frac{d \norm{\bfu}^2}{2 dt} + \Lambda \norm{w}^2 + \Pr \norm{\nabla \bfu}^2 + \frac{1}{\Fr^2} \int_{\Omega} \rho w dV= 0, \\
 &  \label{energy-equ2}
  \frac{d \norm{\rho}^2}{2 dt} + \frac{\Lambda}{\Ro \Fr^2} \int_{\Omega} v \rho dV + \norm{\nabla \rho}^2 = 0.
\end{align}
We derive from \eqref{energy}, \eqref{energy-equ1} and \eqref{energy-equ2} that
\[
  \frac{d E}{2 dt} = - \Lambda \norm{w}^2 - \Pr \norm{\nabla \bfu}^2 - \frac{1}{\Fr^2} \int_{\Omega} \rho ( w + v )dV - \frac{\Ro}{\Lambda}\norm{\nabla \rho}^2.
\]

Using the Poincare inequalities (thanks to the no-slip boundary conditions):
$$
\norm{\nabla \bfu}^2 \ge \pi^2 \norm{\bfu}^2, \qquad \norm{\nabla \rho}^2
 \ge \pi^2 \norm{\rho}^2,
$$
we obtain that
\begin{equation} \label{3.4}
\frac{d E}{2 dt} \le - \Lambda \norm{w}^2 - \Pr \pi^2 \norm{\bfu}^2 - \frac{1}{\Fr^2} \int_{\Omega} \rho ( w + v ) dV - \frac{\Ro \pi^2}{\Lambda}\norm{\rho}^2.
\end{equation}

By the Cauchy-Schwarz and the Young inequalities, we have
\[
  \left\vert \int_{\Omega} \rho ( w + v )dV \right\vert \le
  \epsilon \norm{\rho}^2 + \frac{1}{2\epsilon} \norm{w}^2 + \frac{1}{2\epsilon} \norm{v}^2 \le
  \epsilon \norm{\rho}^2 + \frac{1}{2\epsilon} \norm{\bfu}^2.
\]
Consequently, dropping the $-\Lambda \norm{w}^2$ term, we infer from \eqref{3.4} that
\[
  \frac{d E}{2 dt} \le (- \Pr \pi^2 + \frac{1}{2\Fr^2 \epsilon} ) \norm{\bfu}^2 + (-\frac{\Ro \pi^2}{\Lambda} + \frac{\epsilon}{\Fr^2})\norm{\rho}^2.
\]
Choosing $\epsilon = \frac{\Fr^2 \Ro \pi^2}{2\Lambda}$, we arrive at
\[
  \frac{d E}{2 dt} \le \left(- \Pr \pi^2 + \frac{\Lambda}{\Fr^4 \Ro \pi^2} \right) \norm{\bfu}^2 -\frac{\Ro \pi^2}{2\Lambda} \norm{\rho}^2.
\]
By \eqref{energy-stability condition}, $\Pr \pi^2 - \frac{\Lambda}{\Fr^4 \Ro \pi^2} > 0$. Letting
\[
  c_0 = 2\min\left\{\Pr \pi^2 - \frac{\Lambda}{\Fr^4 \Ro \pi^2}, \frac{\pi^2}{2}\right\},
\]
we finally obtain that
\[
  \frac{dE}{dt} \le -c_0 E(t),
\]
and the theorem follows.
\end{proof}

\begin{remark}
\autoref{Energy stability theorem} is still valid if one assumes Dirichlet or Neumann type boundary conditions in the horizontal directions for the deviation fields rather than the periodicity condition assumed in this study.
However the proof is not valid if one considers free-slip boundary conditions in the vertical directions.
\end{remark}

\section{Spectrum of Linearized Problem and Principle of Exchange of Stabilities}

\subsection{Linearized eigenvalue problem}
In this section we consider the eigenvalue problem associated with the linear part of the main equations \eqref{main equs}:
\begin{equation} \label{pde eig prob}
  \begin{aligned}
    & \Pr \Delta \bfu -\Lambda z \frac{\partial \bfu}{\partial x} - \Lambda w \hat{e}_3  - \frac{1}{\Ro}\hat{e}_3 \times \bfu - \nabla p - \frac{1}{\Fr^2}\rho \hat{e}_3 = \beta \bfu, \\
    & -\frac{\Lambda}{\Ro \Fr^2} v + \Delta\rho = \beta \rho , \\
    & \nabla \cdot \bfu = 0,
  \end{aligned}
\end{equation}
together with the boundary conditions \eqref{BC}.

The above eigenvalue system \eqref{pde eig prob} and the boundary conditions \eqref{BC} have eigenvectors of the form
\begin{equation} \label{ansatz}
  \phi_{m_x, m_y, m_z} =
  \begin{bmatrix}
    \bfu \\ \rho
  \end{bmatrix}
  = e^{i \left( \frac{m_x \pi x}{L_x} + \frac{m_y \pi y}{L_y} \right)}
  \begin{bmatrix}
    U_{m_x, m_y, m_z}(z) \\ V_{m_x, m_y, m_z}(z) \\ W_{m_x, m_y, m_z}(z) \\ R_{m_x, m_y, m_z}(z)
  \end{bmatrix},
\end{equation}
with corresponding eigenvalue $\beta_{m_x, m_y, m_z}$
The horizontal wave number is defined by
\begin{equation}\label{h-wave number}
\alpha^2 = \alpha_x^2 + \alpha_y^2, \qquad
(\alpha_x, \alpha_y) = \left(\frac{m_x \pi}{L_x}, \frac{m_y \pi}{L_y}\right).
\end{equation}

{\it Case $\alpha = 0$:} In this case, the corresponding eigenfunctions do not depend on $x$ and $y$ and $\nabla \cdot \bfu = 0$ implies $\frac{\partial w}{\partial z} = 0$, hence $w = 0$ due to the boundary conditions. Thus system \eqref{pde eig prob} reduces to
\[
  \begin{aligned}
  &
  \Pr D^2 U + \frac{1}{\Ro} V = \beta U, \\
&  \Pr D^2 V - \frac{1}{\Ro} U = \beta V, \\
 & D^2 R - \frac{\Lambda}{\Ro \Fr^2} V = \beta R, \\
  & U = V = R = 0 && \text{at } \quad z=0,1,
  \end{aligned}
\]
where $D = \frac{d}{dz}$.
The solutions can be found for $m_z \in \mathbb{Z}^+$, and are given by
\begin{equation} \label{alpha=0 eigenpairs}
  \begin{aligned}
  & \beta_{0,0,m_z}^1 = - m_z^2 \pi^2, \qquad
  && \phi_{0,0,m_z}^1 = [0,0,0,\sqrt{2} \sin m_z \pi z], \\
  & \beta^2_{0,0,m_z} = -\Pr m_z^2 \pi^2 - \frac{1}{\Ro}i, \qquad
  && \phi^2_{0,0,m_z} = \left[1, i , 0, \frac{-i \Lambda/\Fr^2}{i + m_z^2 \pi^2(\Pr-1)\Ro}\right]\sin m_z \pi z, \\
  & \beta^3_{0,0,m_z} = \overline{\beta_{0,0,m_z}^2}, \qquad
  && \phi^3_{0,0,m_z} = \overline{\phi_{0,0,m_z}^2}.
\end{aligned}
\end{equation}

{\it Case $\alpha \ne 0$:} In this case \eqref{pde eig prob} reduces to
\begin{align}
  & \Pr (D^2 - \alpha^2)C - \Lambda i \alpha_x z C -\frac{1}{\Ro} DW = \beta C, \label{ode eig prob 1} \\
  & \frac{1}{\Ro} DC + \Pr(D^2 - \alpha^2)^2 W - \Lambda \alpha^2 W - \Lambda i \alpha_x \left( D(z DW) - \alpha^2 z W \right) + \frac{\alpha^2}{\Fr^2} R  = \beta (D^2 - \alpha^2)W,  \label{ode eig prob 2}\\
  & - \frac{\Lambda}{\Ro \Fr^2 \alpha^2} (i \alpha_x C + i \alpha_y DW) + (D^2 - \alpha^2)R = \beta R,  \label{ode eig prob 3}\\
  & C = W = D W = R = 0 \qquad \text{at } z=0,1 \label{ode eig prob 4}.
\end{align}
Here $C$ is the vertical component of the vorticity,
\begin{equation} \label{UVCW}
  \begin{aligned}
    & - i \alpha_x V+ i \alpha_y U = C, \\
    & i \alpha_x U + i \alpha_y V = -DW.
  \end{aligned}
\end{equation}
The second condition in \eqref{UVCW} is just a consequence of incompressibility. This condition also imposes the boundary condition $DW=0$ at $z=0,1$ in \eqref{ode eig prob 4}. Note that $U$ and $V$ can be recovered from $C$ and $DW$ by the equations \eqref{UVCW}.




\subsection{Adjoint linear equations}
To describe the transition number we will also need the eigenvectors of the adjoint linear eigenvalue problem, given by
\begin{equation} \label{adjoint pde eig prob}
  \begin{aligned}
    & \Pr \Delta \bfu^{\ast} + \Lambda z \frac{\partial \bfu^{\ast}}{\partial x} - \Lambda w^{\ast} \hat{e}_3  + \frac{1}{\Ro}\hat{e}_3 \times \bfu^{\ast} - \nabla p^{\ast} - \frac{\Lambda}{\Ro \Fr^2}\rho^{\ast} \hat{e}_2 = \overline{\beta} \bfu^{\ast}, \\
    & -\frac{1}{\Fr^2} w^{\ast} + \Delta \rho^{\ast} = \overline{\beta} \rho^{\ast}, \\
    & \nabla \cdot \bfu^{\ast} = 0.
  \end{aligned}
\end{equation}
This eigenvalue problem can be solved with the same ansatz \eqref{ansatz}. The equations in the $\alpha=0$ case are
\begin{align*}
  & \Pr D^2 U - \frac{1}{\Ro} V = \beta U, \\
  & \Pr D^2 V + \frac{1}{\Ro} U - \frac{\Lambda}{\Ro \Fr^2} R = \beta V, \\
  & D^2 R = \beta R.
\end{align*}
The corresponding eigenvalues for $\beta_{0,0,m_z}^{k \ast} = \overline{\beta_{0,0,m_z}^{k}}$, $k=1,2,3$ and eigenvectors are:
\begin{equation} \label{adjoint alpha=0 eigenpair1}
  \begin{aligned}
  & \phi^{1 \ast}_{0,0,m_z} = [\Lambda, m_z^2 \pi^2(1-\Pr)\Ro \Lambda, 0, \Fr^2(1+m^4\pi^4(1-\Pr)^2\Ro^2]\sin m_z \pi z, \\
  & \phi^{2 \ast}_{0,0,m_z} = [1, -i , 0, 0]\sin m_z \pi z, \\
  & \phi^{3 \ast}_{0,0,m_z} = \overline{\phi_{0,0,m_z}^2}.
\end{aligned}
\end{equation}

For $\alpha \ne 0$, the equations reduce to
\begin{equation}
\label{adj ode eig prob}
 \begin{aligned}
&    \Pr (D^2 - \alpha^2)C^{\ast} + \Lambda i \alpha_x z C^{\ast} + \frac{1}{\Ro} DW^{\ast} + \frac{\Lambda}{\Ro \Fr^2} i \alpha_x R^{\ast}  = \overline{\beta} C^{\ast}, \\
  &  -\frac{1}{\Ro} DC^{\ast} + \Pr(D^2 - \alpha^2)^2 W^{\ast} - \Lambda \alpha^2 W^{\ast} +   \Lambda i \alpha_x \left( D(z DW^{\ast}) - \alpha^2 z W^{\ast} \right) \\
  & \qquad  \qquad  + \frac{\Lambda i \alpha_y}{\Ro \Fr^2} DR^{\ast}   = \overline{\beta} (D^2 - \alpha^2)W^{\ast},  \\
&     -\frac{1}{\Fr^2} W^{\ast} + (D^2 - \alpha^2)R^{\ast}  = \overline{\beta} R^{\ast}, \\
&    C^{\ast} = W^{\ast} = D W^{\ast} = R^{\ast} = 0 \qquad \text{at } z=0,1.
  \end{aligned}
\end{equation}

\subsection{Principle of exchange of stabilities (PES)}
We will assume that the eigenvalues are ordered so that
\[
   \Re \beta_{m_x, m_y, 1}(\lambda) \ge \Re \beta_{m_x, m_y, 2}(\lambda) \ge \cdots,
   \qquad \text{ for each } (m_x, m_y) \in \mathbb{Z}\times\mathbb{Z},
\]
where $\lambda = (\Fr, \Lambda, \Ro, \Pr)$.

By \eqref{alpha=0 eigenpairs}, the eigenvalues have always negative real part when $m_x = m_y = 0$.
Due to the horizontal periodicity of the problem, generically, an eigenvalue with nonzero horizontal wave number is either real with multiplicity two or is simple, non-real and its complex conjugate is also an eigenvalue.

Let
\begin{equation} \label{critical wave}
\alpha_c=\sqrt{(\alpha^c_x)^2 + (\alpha_y^c)^2}, \qquad
(\alpha_x^c, \alpha_y^c)=\left(\frac{m_x^c \pi}{L_x}, \frac{m_y^c \pi}{L_y}\right),
\end{equation}
be the critical horizontal wave number so that $\Re \beta_{m_x^c, m_y^c, 1}(\lambda)$ is the largest among all $\Re \beta_{m_x, m_y, m_z}(\lambda)$.

For our transition theorem, we will assume the existence of a critical parameter
\[
  \Lambda_c = \Lambda_c(L_x, L_y, \Fr, \Pr, \Ro),
\]
and integers $(m_x^c, m_y^c) \in \mathbb{Z}\times \mathbb{Z} \setminus \{(0, 0) \}$ such that the following PES condition is satisfied:
\begin{equation} \label{PES}
  \begin{aligned}
    & \Re \beta_{m_x^c, m_y^c, 1} = \Re \beta_{m_x^c, m_y^c, 2}
    \begin{cases}
      < 0 &  \text{ if }  \Lambda < \Lambda_c, \\
      = 0 &\text{ if }  \Lambda = \Lambda_c, \\
      > 0 &\text{ if }  \Lambda > \Lambda_c,
    \end{cases} \\
    & \Re \beta_{m_x, m_y, m_z} (\Lambda_c) < 0 \qquad   \text{ for } (m_x, m_y) \ne (m_x^c, m_y^c) \text{ or } m_z \notin \{ 1, 2 \}.
  \end{aligned}
\end{equation}
Note that the condition \eqref{Nonlinear Stability condition} assures that the eigenvalues lie on the left complex half plane if $\Lambda$ is small enough. In the last section, we will show the existence of parameter regimes where the PES \eqref{PES} is satisfied numerically.

By a general dynamic transition theorem \cite[Theorem 2.1.5]{ptd}, as $\Lambda$ crosses $\Lambda_c$,  the system always undergoes a dynamic transition from the basic shear flow to one of the three types (continuous, catastrophic or random). The types and structure of the dynamic transition are then dictated by the nonlinear interactions of different modes, and the remaining part of the paper is devoted to detailed characterization of the dynamic transition.

\section{Nonlinear Stability}
The following theorem gives a sufficient condition on the local nonlinear stability of the basic solution, in terms of the system parameters and the wave number of the critical mode. The condition is not optimal and may further be enhanced.

We will denote the eigenvalue with the largest real part as the critical eigenvalue, the corresponding eigenvectors as the critical eigenvectors and corresponding wave number as the critical wavenumber.
\begin{theorem} \label{Thm: Nonlinear Stability}
  Let $(\alpha_x^c, \alpha_y^c) = (\frac{m_x^c \pi}{L_x}, \frac{m_y^c \pi}{L_y})$ be the critical wave number as given in \eqref{critical wave}. If
  \begin{equation} \label{Nonlinear Stability condition}
    \alpha_c^2 := (\alpha_x^c)^2 + (\alpha_y^c)^2 > \left( \frac{1}{\text{\rm Fr}^4} + \frac{\sqrt{2}}{\text{\rm Ro}} \right)
    \max \left\{\left(\frac{\Lambda^2}{\Pr}\right)^{1/3}, \frac{1}{\Pr} \right\},
  \end{equation}
  then the system is locally nonlinearly stable.
\end{theorem}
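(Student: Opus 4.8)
The argument rests on two essentially independent facts: (i) condition \eqref{Nonlinear Stability condition} forces $\Re\beta_{m_x^c,m_y^c,1}<0$, so the basic shear flow is linearly (spectrally) stable; and (ii) for the dissipative system \eqref{abstract equ}, whose nonlinearity $G$ is bilinear, spectral stability with a gap implies local, in fact exponential, nonlinear stability. Step (ii) is the classical principle of linearized stability and is covered by the general dynamic transition framework of \cite{ptd}; the substance is in step (i).

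\emph{Step (i): an a priori eigenvalue estimate.} Fix a horizontal wave number $\alpha\neq0$ and let $(\bfu,\rho)$ be an eigenfunction of \eqref{pde eig prob} of the form \eqref{ansatz} with eigenvalue $\beta$. I would take the $L^2(\Omega)$ inner product of the momentum equation with $\overline{\bfu}$ and of the density equation with $\overline{\rho}$. Using $\nabla\cdot\bfu=0$ and the boundary conditions \eqref{BC}, the pressure gradient drops, and the shear advection term $\Lambda z\,\partial_x\bfu$ and the Coriolis term $\Ro^{-1}\hat e_3\times\bfu$ contribute only to the imaginary part; taking real parts leaves
\[
\Re\beta\,\norm{\bfu}^2+\Pr\norm{\nabla\bfu}^2+\Lambda\norm{w}^2=-\tfrac{1}{\Fr^2}\,\Re\langle\rho,w\rangle,
\]
\[
\Re\beta\,\norm{\rho}^2+\norm{\nabla\rho}^2=-\tfrac{\Lambda}{\Ro\Fr^2}\,\Re\langle v,\rho\rangle.
\]
Assume for contradiction that $\Re\beta\ge0$. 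I would bound the two coupling integrals by Cauchy--Schwarz and Young's inequality — choosing the Young parameter in the first identity so as to absorb exactly $\Lambda\norm{w}^2$ — form the weighted sum (momentum identity) $+\,t\times$(density identity) for a free weight $t>0$, and apply the Poincaré inequalities $\norm{\nabla\bfu}^2\ge\alpha^2\norm{\bfu}^2$ and $\norm{\nabla\rho}^2\ge\alpha^2\norm{\rho}^2$ (valid on a single horizontal mode; one may also keep the $+\pi^2$ from the $z$-direction). This yields
\[
0\ \le\ \Re\beta\,(\norm{\bfu}^2+t\norm{\rho}^2)\ \le\ -c_1(\alpha)\,\norm{\bfu}^2-c_2(\alpha)\,\norm{\rho}^2,
\]
with $c_1,c_2$ explicit and simultaneously positive once $\alpha^2$ exceeds a threshold depending on $t$, the remaining Young parameter, and $\Lambda,\Fr,\Ro,\Pr$. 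Minimizing that threshold over the free parameters gives $\alpha^2\le(\Fr^{-4}+\sqrt2\,\Ro^{-1})\max\{(\Lambda^2/\Pr)^{1/3},\Pr^{-1}\}$, contradicting \eqref{Nonlinear Stability condition}. Hence at every wave number $\alpha$ above that threshold all eigenvalues have $\Re\beta<0$, while for $\alpha=0$ the eigenvalues in \eqref{alpha=0 eigenpairs} are already negative; since $\alpha_c$ is by \eqref{critical wave} the wave number at which $\max_{m_x,m_y}\Re\beta_{m_x,m_y,1}$ is attained, the hypothesis $\alpha_c^2>(\cdots)$ forces $\Re\beta_{m_x^c,m_y^c,1}<0$, i.e. $\sup\Re\sigma(L)<0$.

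\emph{Step (ii): from spectral to nonlinear stability.} The operator $L$ in \eqref{operators} is a lower-order (relatively bounded, relative bound zero) perturbation of $\mathcal P\,\mathrm{diag}(\Pr\Delta,\Delta)$ with Dirichlet data in $z$, hence sectorial and the generator of an analytic semigroup on $X$; by step (i), $\norm{e^{tL}}_{\mathcal L(X)}\le Ce^{-\delta t}$ and $\norm{e^{tL}}_{\mathcal L(X,X_\theta)}\le Ct^{-\theta}e^{-\delta t}$ on the fractional power spaces $X_\theta$. Since $G$ is bilinear and, by Sobolev embedding in dimension three, bounded from $X_\theta\times X_\theta$ into $X$ for a suitable $\theta\in(0,1)$, a contraction/Gronwall argument on the mild formulation $\phi(t)=e^{tL}\phi_0+\int_0^t e^{(t-s)L}G(\phi(s))\,ds$ shows that data of sufficiently small $X_\theta$-norm produce a global solution decaying to $0$ exponentially, which is the asserted local nonlinear stability.

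The only genuinely nontrivial step, and the one I expect to be the main obstacle, is the parameter optimization inside step (i): producing precisely the factor $\Fr^{-4}+\sqrt2\,\Ro^{-1}$ and the piecewise quantity $\max\{(\Lambda^2/\Pr)^{1/3},\Pr^{-1}\}$ — whose cube root and $\max$ reflect a balance changing character near $\Lambda\sim\Pr^{-1}$ — requires a careful joint choice of $t$ and the Young parameters rather than the crude ``some constant depending on the parameters'' a first pass gives; consistent with the authors' remark, the bound is not expected to be sharp. Everything else (the inner-product identities, uniformity over $m_z$ and over complex $\beta$, and the semigroup argument) is routine.
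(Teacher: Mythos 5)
Your overall architecture coincides with the paper's: an inner-product (energy) estimate on the linearized eigenvalue problem to force $\Re\beta<0$ under \eqref{Nonlinear Stability condition}, followed by the observation that $L$ is sectorial so that spectral stability with a gap gives local nonlinear stability (the paper simply cites Henry, Theorem 5.1.1, where you sketch the mild-solution/contraction argument; either is fine). So Step (ii) is in order, and the identities you write in Step (i) — pressure and Coriolis and shear-advection terms dropping from the real part — are correct.

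The gap is in Step (i), and it is exactly the step you yourself flag as ``the main obstacle'': you assert, without carrying it out, that optimizing the weight $t$ and the Young parameters in the primitive-variable identities yields precisely the threshold $\left(\Fr^{-4}+\sqrt2\,\Ro^{-1}\right)\max\{(\Lambda^2/\Pr)^{1/3},\Pr^{-1}\}$. It does not. The paper performs the energy estimate not on $(\bfu,\rho)$ but on the reduced variables $(C,W,R)$ of \eqref{ode eig prob 1}--\eqref{ode eig prob 4}, where the density coupling appears as $\mu(i\alpha_x C+i\alpha_y DW)$ with $\mu=\Lambda/(\Ro\Fr^2\alpha_c^2)$; the extra factor $\alpha_c^{-2}$ (from inverting \eqref{UVCW} to recover $v$) is what produces the balance $\mu^2/2\le\Pr\alpha_c^2$, i.e.\ $\alpha_c^6\gtrsim\Lambda^2/(\Ro^2\Fr^4\Pr)$, whence the cube root $(\Lambda^2/\Pr)^{1/3}$, while the quartic dissipation $\Pr\alpha_c^4\norm{W}^2$ coming from $(D^2-\alpha^2)^2W$ supplies the other branch of the $\max$. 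In your primitive-variable identities the coupling $\frac{\Lambda}{\Ro\Fr^2}\langle v,\rho\rangle$ carries no $\alpha^{-2}$ and the dissipation is only $\Pr\alpha^2\norm{\bfu}^2$, so the optimization lands on a threshold of a different analytic form (roughly $\alpha^2\gtrsim\Lambda^{1/2}\Pr^{-1/2}\Ro^{-1/2}\Fr^{-2}$, depending on how the Young inequalities are split). That threshold happens to be implied by \eqref{Nonlinear Stability condition} — e.g.\ via $\Fr^{-4}+\sqrt2\,\Ro^{-1}\ge 2^{5/4}\Fr^{-2}\Ro^{-1/2}$ and $\max\{(\Lambda^2/\Pr)^{1/3},\Pr^{-1}\}\ge(\Lambda^2/\Pr)^{1/4}\Pr^{-1/4}=\Lambda^{1/2}\Pr^{-1/2}$ — so your route could be completed into a proof of the stated theorem (indeed with a sharper constant), but that comparison is an essential, nontrivial piece of the argument that you neither state nor verify. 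As written, the quantitative heart of the proof — deriving the specific constant in \eqref{Nonlinear Stability condition} or showing that it dominates whatever your estimate produces — is missing.
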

\begin{proof}
We will prove the linear stability of the basic solution by showing that the spectrum of the linear operator is in the left half complex plane bounded away from the imaginary axis.
Since the linear operator consists of the Laplacian plus lower order terms, it is sectorial.
Hence the linear stability of the basic solution implies its local nonlinear stability by Theorem 5.1.1 in \cite{henry}.

Thus we need to show that the real part of the critical eigenvalue is negative. For this, consider the critical eigenvalue and eigenvector of the problem \eqref{ode eig prob 1}--\eqref{ode eig prob 4}.

 Here we also use  $\norm{\cdot}$ for the  $L^2(0,1)$ norm. Then taking $L^2$ inner product  between  \eqref{ode eig prob 1}--\eqref{ode eig prob 3} and $(C, W, R)$, we obtain that
\begin{equation} \label{linear stability proof eq1}
    -\mathcal{E}_1 - \Lambda i \alpha^c_x \mathcal{E}_2 + \mathcal{I} = \beta \mathcal{E}_3,
  \end{equation}
  where
\begin{align*}
&
 \mathcal{E}_1 = \Pr (\norm{DC}^2 + \alpha_c^2 \norm{C}^2) + \Pr (\norm{D^2 W}^2 +2\alpha_c^2 \norm{DW}^2 + \alpha_c^4 \norm{W}^4) \\
 & \qquad \qquad + \Lambda \alpha_c^2 \norm{W}^2 + \norm{DR}^2 + \alpha_c^2 \norm{R}^2, \\
 &
    \mathcal{E}_2 = \norm{z^{1/2} C}^2 + \norm{z^{1/2}DW}^2 + \alpha_c^2 \norm{z^{1/2}W}^2, \\
 &
    \mathcal{E}_3 = \norm{C}^2 + \norm{DW}^2 + \alpha_c^2 \norm{W}^2 + \norm{R}^2, \\
  &
   \mathcal{I} = \int_0^1 \left( -\frac{1}{\Ro} DW \overline{C} - \frac{1}{\Ro} DC \overline{W} - \frac{\alpha_c^2}{\Fr^2}  R \overline{W} - i A \alpha^c_x C \overline{R} + i \mu \alpha^c_y DW \overline{R} \right) dz, \\
   &\mu = \dfrac{\Lambda}{\Ro \Fr^2 \alpha_c^2}.
\end{align*}

The real part of \eqref{linear stability proof eq1} implies that
  \[
    \Re \beta = \frac{-\mathcal{E}_1 + \Re \mathcal{I}}{\mathcal{E}_3}.
  \]
The proof will be done if we can show that $\abs{\mathcal{I}} < \mathcal{E}_1$ which implies $\Re \beta <0$.

  We estimate $\abs{\mathcal{I}}$ as
  \begin{equation} \label{linear stability proof eq2}
    \abs{\mathcal{I}} \le \Pr \norm{DC}^2 + \norm{DR}^2 + \alpha_c^2 \norm{R}^2 + \frac{\mu^2}{2} \norm{C}^2 + \left( \frac{1}{\Ro^2 \Pr}  + \frac{\alpha_c^2}{2 \Fr^4} +\frac{\mu^2 \alpha_c^2}{4} \right) \norm{W}^2.
  \end{equation}
In view of  \eqref{linear stability proof eq1} with \eqref{linear stability proof eq2}, the condition $\abs{\mathcal{I}} < \mathcal{E}_1$ will be satisfied if
  \begin{equation}\label{linear stability proof two conditions}
    \frac{1}{\Ro^2 \Pr}  + \frac{\alpha_c^2}{2 \Fr^4} +\frac{\mu^2 \alpha_c^2}{4} \le \Pr \alpha_c^4 \qquad \text{and} \qquad
    \frac{\mu^2}{2} \le \Pr \alpha_c^2.
  \end{equation}

Now we use the Young's inequality with $p=3/2$ and $q=3$ in the second inequality below
  \[
    \left(\frac{1}{\Fr^4} + \frac{\sqrt{2}}{\Ro}  \right)^{3} \ge
    \left(\frac{1}{3 \Fr^4} + \frac{2}{3 \Ro}  \right)^{3} \ge
    \frac{1}{\Ro^2} \frac{1}{\Fr^4}.
  \]
  By \eqref{Nonlinear Stability condition} and the above estimate, we deduce that
  \[
    \frac{\mu^2}{2} =
    \frac{\Lambda^2 \alpha_c^2}{2 \Ro^2 \Fr^4 \alpha_c^6} \le
    \frac{\Lambda^2 \alpha_c^2}{2 \Ro^2 \Fr^4} \left(\frac{1}{\Fr^4} + \frac{\sqrt{2}}{\Ro}  \right)^{-3} \frac{\Pr}{\Lambda^2} \le
    \Pr \alpha_c^2
  \]
  and, consequently, the second condition in \eqref{linear stability proof two conditions} is satisfied.

  Using the second condition in \eqref{linear stability proof two conditions}, the first condition in \eqref{linear stability proof two conditions} will be satisfied if
  \begin{equation}\label{linear stability proof one cond}
    \Pr \alpha_c^4 - \frac{\alpha_c^2}{\Fr^4} - \frac{2}{\Ro^2 \Pr}  \ge 0.
  \end{equation}
    By \eqref{Nonlinear Stability condition}, and using $(a+b)^2 \ge a^2 + b^2$ for positive $a$, $b$, we have
  \[
    \Pr \alpha_c^2 \ge
    \frac{1}{\Fr^4} + \frac{\sqrt{2}}{\Ro} \ge
    \frac{1}{2\Fr^4} + \frac{1}{2} \sqrt{\frac{1}{\Fr^8} + \frac{8}{\Ro^2}},
  \]
  which implies that \eqref{linear stability proof one cond} indeed holds. The proof is complete.
\end{proof}

\begin{remark}
  A sufficient condition for the local nonlinear stability of the system can also be stated in terms of the minimum wave number given by
  \[
    \alpha_{\min} = \min \left \{\sqrt{\left( \frac{m_x \pi}{L_x} \right)^{2} + \left( \frac{m_y \pi}{L_y} \right)^{2} }: (m_x, m_y)\in  \ \mathbb{Z} \times \mathbb{Z}, m_x^2 + m_y^2 \ne 0 \right\}
    = \min \left\{ \frac{\pi}{L_x}, \frac{\pi}{L_y}\right\}.
  \]
  Noticing that $\alpha_{\min} \le \alpha_c$, it follows that if $\alpha_{\min}$ satisfies \eqref{Nonlinear Stability condition} then the system is locally nonlinearly stable.
\end{remark}

\section{Main Dynamic Transition Theorem}

As mentioned earlier, by a general dynamic transition theorem \cite[Theorem 2.1.5]{ptd}, as $\Lambda$ crosses $\Lambda_c$,  the system always undergoes a dynamic transition from the basic shear flow to one of the three types (continuous, catastrophic or random). In this section, we shall demonstrate that the types and the structure of the dynamic transitions are dictated by the sign of the transition number to be defined below, which captures the nonlinear interactions of different modes. To this end, from now on we use the following notations for the eigenvalues and eigenvectors.
\[
  \begin{aligned}
    & \phi_1 = \phi_{m_x^c, m_y^c, 1}, \quad
    \phi_{0, m_z} = \phi_{0, 0, m_z}, \quad
    \phi_{2, m_z} = \phi_{2 m_x^c, 2 m_y^c, m_z}, \\
    & \phi_1^{\ast} = \phi^{\ast}_{m_x^c, m_y^c, 1}, \quad
    \phi^{\ast}_{0, m_z} = \phi^{\ast}_{0, 0, m_z}, \quad
    \phi_{2, m_z}^{\ast} = \phi^{\ast}_{2 m_x^c, 2 m_y^c, m_z}, \\
    & \beta_1 = \beta_{m_x^c, m_y^c, 1}, \quad
    \beta_{0, m_z} = \beta_{0, 0, m_z}, \quad
    \beta_{2, m_z} = \beta_{2 m_x^c, 2 m_y^c, m_z}.
  \end{aligned}
\]

We then define the transition number by
\begin{equation} \label{A}
  A = \sum_{m_z=1}^{\infty} A_{0,m_z} + A_{2, m_z},
\end{equation}
where
\begin{equation} \label{A0 A2}
  \begin{aligned}
    & A_{0, m_z} = \frac{1}{\langle \phi_1, \phi_1^{\ast} \rangle} \Phi_{0,m_z} \langle G_s(\phi_1, \phi_{0,m_z}), \phi_1^{\ast} \rangle,  \\
    & A_{2, m_z} = \frac{1}{\langle \phi_1, \phi_1^{\ast} \rangle} \Phi_{2, m_z} \langle G_s(\overline{\phi_1}, \phi_{2, m_z}), \phi_1^{\ast} \rangle, \\
    & \Phi_{0,m_z} = \frac{1}{-\beta_{0,m_z} \langle \phi_{0,m_z}, \phi_{0,m_z}^{\ast} \rangle} \langle G_s(\phi_1, \overline{\phi_1}), \phi_{0,m_z}^{\ast} \rangle, \\
    & \Phi_{2, m_z} = \frac{1}{(2 \beta_1 -\beta_{2, m_z}) \langle \phi_{2, m_z}, \phi_{2, m_z}^{\ast} \rangle} \langle G(\phi_1, \phi_1), \phi_{2, m_z}^{\ast} \rangle. \\
  \end{aligned}
\end{equation}
Here  $G_s(\varphi_1, \varphi_2) = G(\varphi_1, \varphi_2) + G(\varphi_2, \varphi_1) $ and $G((\bfu_1, \rho_1), (\bfu_2, \rho_2) ) = -\mathcal{P} \left( \bfu_1 \cdot \nabla (\bfu_2, \rho_2) \right)$; see \eqref{operators}.

The numbers $A_{0, j}$ and $A_{2, j}$ represent the nonlinear interactions of the critical modes with the modes having wave numbers $0$ and $2 \alpha_{c}$ respectively.

We note that the transition number $A$ defined by \eqref{A} is valid for the critical crossing of both the complex and real eigenvalues, with the only difference being that $\Im(\beta_1) = 0$ in the last expression in \eqref{A0 A2} in the critical crossing of real eigenvalues.

\medskip

The main dynamic transition theorem of the paper is as follows:

\begin{theorem}\label{main thm}
Assume that the PES condition \eqref{PES} is valid.
  \begin{enumerate}
    \item If $\Im(\beta_1) = 0$ near $\Lambda = \Lambda_c$, then the transition number $A$ defined by \eqref{A}  is real and the following statements hold true:
    \begin{enumerate}
      \item If $A<0$, then  the original system \eqref{main equs} with \eqref{BC} undergoes a continuous dynamic transition on $\Lambda > \Lambda_c$ and bifurcates, on $\Lambda > \Lambda_c$, from the basic shear flow to an attracting circle of steady states:
      \[
        \Sigma_{\Lambda} = \left \{2 \sqrt{\frac{-\beta_1}{A}} \Re (e^{i \gamma} \phi_1) + {\rm O}(\abs{\beta_1}) : \gamma \in \mathbb{R} \right \}.
      \]
      \item If $A>0$,  then the system \eqref{main equs} with \eqref{BC} undergoes a jump transition,  and bifurcates, on $\Lambda < \Lambda_c$, from the basic shear flow to a repeller $\Sigma_{\Lambda}$ having the same form as given above.
    \end{enumerate}
    \item If $\Im(\beta_1) \not= 0$ near $\Lambda = \Lambda_c$, then the transition number $A$ is non-real and the following statements hold true:
    \begin{enumerate}
      \item If $\Re(A) < 0$, then the system undergoes a  continuous transition, and bifurcates, on  $\Lambda > \Lambda_c$, from the basic shear flow  to a stable limit cycle given by
      \begin{equation} \label{bifurcated solution}
        \phi_{\text{bif}} = z(t) \phi_1 + \overline{z(t) \phi_1} + O(-\Re(\beta_1)),
      \end{equation}
      where
      \begin{equation} \label{ode bifurcated solution}
        z(t) = \sqrt{\frac{-\Re(\beta_1)}{\Re(A)}} \exp \left(i \Im(\beta_1) t \right).
      \end{equation}
      \item If $\Re(A) > 0$, then the system undergoes a jump transition, and bifurcates, on  $\Lambda <\Lambda_c$, from the basic shear flow to an unstable periodic solution having the same form as \eqref{bifurcated solution} and \eqref{ode bifurcated solution}.
    \end{enumerate}
  \end{enumerate}
\end{theorem}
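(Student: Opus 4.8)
The plan is to prove \autoref{main thm} by a center manifold reduction of \eqref{abstract equ} to the two-dimensional critical eigenspace, followed by an analysis of the resulting amplitude equation. Under the PES condition \eqref{PES} the critical (center) subspace is $E_1 = \operatorname{span}_{\mathbb R}\{\Re(\phi_1),\Im(\phi_1)\} = \{z\phi_1 + \bar z\,\overline{\phi_1}:z\in\mathbb C\}$, where $\overline{\phi_1}=\phi_{-m_x^c,-m_y^c,1}$ carries eigenvalue $\beta_1$ in the real crossing and $\overline{\beta_1}$ in the complex crossing, while every other eigenmode has strictly negative real part at $\Lambda=\Lambda_c$. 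Invoking the center manifold theorem in \cite{ptd}, near $\Lambda_c$ the local flow of \eqref{abstract equ} around the basic state lives on a locally invariant, locally attracting two-dimensional manifold, the graph of a map $\Phi=\Phi(z,\bar z)$ into the stable subspace with $\Phi={\rm O}(|z|^2)$; projecting \eqref{abstract equ} onto $\phi_1^{\ast}$ (using $\langle\overline{\phi_1},\phi_1^{\ast}\rangle=0$) gives the reduced equation
\[
  \frac{dz}{dt} = \beta_1 z + \frac{1}{\langle\phi_1,\phi_1^{\ast}\rangle}\big\langle G\big(z\phi_1+\bar z\,\overline{\phi_1}+\Phi\big),\,\phi_1^{\ast}\big\rangle .
\]
Since \eqref{main equs} is equivariant under the horizontal translations $(x,y)\mapsto(x+a,y+b)$, which act on $E_1$ by $z\mapsto e^{i(\alpha_x^c a+\alpha_y^c b)}z$, all quadratic terms (and all cubic terms except $z|z|^2$) are forced to vanish, so the reduced equation is in normal form $\dot z=\beta_1 z + A\,z|z|^2+{\rm o}(|z|^3)$ for some $A\in\mathbb C$. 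The bulk of the work is to show that $A$ is the number \eqref{A}.

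To identify $A$ one needs the quadratic part of $\Phi$. Splitting the critical nonlinearity, $G(z\phi_1+\bar z\,\overline{\phi_1}) = z^2 G(\phi_1,\phi_1) + |z|^2 G_s(\phi_1,\overline{\phi_1}) + \bar z^2 G(\overline{\phi_1},\overline{\phi_1})$, whose pieces carry horizontal wavenumbers $2\alpha_c$, $0$ and $-2\alpha_c$, one writes $\Phi = z^2\sum_{m_z}\Phi_{2,m_z}\phi_{2,m_z} + |z|^2\sum_{m_z}\Phi_{0,m_z}\phi_{0,m_z} + \bar z^2(\,\cdots)$. Substituting this into the invariance equation for the center manifold and using $\tfrac{d}{dt}(z^2)\approx 2\beta_1 z^2$ and $\tfrac{d}{dt}(|z|^2)={\rm o}(|z|^2)$ near criticality, one is led to the balances $(2\beta_1-\beta_{2,m_z})\langle\phi_{2,m_z},\phi_{2,m_z}^{\ast}\rangle\Phi_{2,m_z} = \langle G(\phi_1,\phi_1),\phi_{2,m_z}^{\ast}\rangle$ and $-\beta_{0,m_z}\langle\phi_{0,m_z},\phi_{0,m_z}^{\ast}\rangle\Phi_{0,m_z} = \langle G_s(\phi_1,\overline{\phi_1}),\phi_{0,m_z}^{\ast}\rangle$, i.e.\ the formulas \eqref{A0 A2}; solvability holds since, by \eqref{PES}, $\Re(2\beta_1-\beta_{2,m_z})>0$ and $\Re(-\beta_{0,m_z})>0$. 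Retaining the shift by $2\beta_1$, rather than discarding it as in the classical approximation, is precisely the "new representation" required because $\beta_1$ is not $0$ at criticality in the oscillatory case. Feeding $\Phi$ back in, the only contributions producing the resonant monomial $z^2\bar z=z|z|^2$ after projection onto $\phi_1^{\ast}$ are $G_s(\bar z\,\overline{\phi_1},\,z^2\Phi_{2,m_z}\phi_{2,m_z})$ (wavenumber $-\alpha_c+2\alpha_c=\alpha_c$) and $G_s(z\phi_1,\,|z|^2\Phi_{0,m_z}\phi_{0,m_z})$ (wavenumber $\alpha_c+0=\alpha_c$); collecting them yields $A=\sum_{m_z}(A_{0,m_z}+A_{2,m_z})$. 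Convergence of the series follows from $\beta_{0,m_z},\beta_{2,m_z}\to-\infty$ like $m_z^2$ together with the decay of the relevant inner products of the trigonometric/Legendre profiles.

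With $A$ determined, the conclusions come from the planar equation $\dot z=\beta_1 z+A|z|^2 z+{\rm o}(|z|^3)$. In polar form $z=re^{i\psi}$ this reads $\dot r=\Re(\beta_1)r+\Re(A)r^3+{\rm o}(r^3)$, $\dot\psi=\Im(\beta_1)+\Im(A)r^2+{\rm o}(r^2)$. In case (1), $\beta_1$ is real; one verifies that $A$ is then real as well, so $\dot\psi\equiv 0$ and every point of the circle $r^2=-\beta_1/A$ is an equilibrium: for $A<0$ this circle exists precisely for $\beta_1>0$, i.e.\ $\Lambda>\Lambda_c$, and is exponentially stable within the center manifold, while for $A>0$ it exists for $\Lambda<\Lambda_c$ and is a repeller; lifting through the graph $\Phi={\rm O}(|z|^2)={\rm O}(|\beta_1|)$ and applying the attractor/jump dichotomy of \cite{ptd} gives $\Sigma_{\Lambda}$ and the continuous versus catastrophic classification. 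In case (2), $\Im(\beta_1)\neq 0$: the equilibrium $r_{\ast}^2=-\Re(\beta_1)/\Re(A)$ of the $r$-equation corresponds to a periodic orbit of the $z$-equation with angular speed $\Im(\beta_1)+\Im(A)r_{\ast}^2=\Im(\beta_1)+{\rm O}(\Re(\beta_1))$, so that $z(t)=\sqrt{-\Re(\beta_1)/\Re(A)}\,e^{i\Im(\beta_1)t}$ up to the stated orders; the usual supercritical/subcritical Hopf analysis, lifted through the center manifold, yields the stable limit cycle \eqref{bifurcated solution}--\eqref{ode bifurcated solution} when $\Re(A)<0$ and the unstable one when $\Re(A)>0$.

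The main obstacle I anticipate is the rigorous justification of the center manifold approximation in this horizontally periodic setting: one must legitimize the $2\beta_1-\beta_{2,m_z}$ representation of $\Phi_{2,m_z}$ (rather than treating the still-moving amplitude $z$ quasi-statically), prove that the infinite $m_z$-sums defining $\Phi$ and $A$ converge in the relevant function space, and show that the neglected contributions are genuinely ${\rm o}(|z|^3)$ uniformly for $\Lambda$ near $\Lambda_c$. A secondary delicate point is the verification, in the real-crossing case (1), that $\Im(A)=0$ — without which one would obtain a slowly drifting pattern rather than a true circle of steady states — which I expect to require an explicit computation using the $\alpha=0$ eigenpairs \eqref{alpha=0 eigenpairs}, \eqref{adjoint alpha=0 eigenpair1} together with the reality of the critical data.
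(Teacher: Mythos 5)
Your proposal is correct and follows essentially the same route as the paper: a center manifold reduction using the shifted resolvent $(2\beta_1-\mathcal{L}_{\lambda})^{-1}$ for the $z^2$ part (the paper's formula \eqref{new cm formula}, derived in Appendix A via the integral representation rather than your invariance-equation heuristic), leading to the coefficients \eqref{A0 A2}, the reduced equation $\dot z=\beta_1 z+A|z|^2z+{\rm o}(3)$, and the polar-coordinate analysis. The only cosmetic differences are that the paper obtains the normal-form structure by direct Fourier orthogonality rather than translation equivariance, and settles $\Im(A)=0$ in the real case by observing that the reduction can be redone with real eigenvectors rather than by explicit computation.
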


\section{Proof of \autoref{main thm}.}
By the dynamic transition theory \cite{ptd}, the dynamical transition is fully captured by the reduced system of the original system to the center manifold associated with the unstable modes. Hence the central gravity of the proof is to derive the reduced system of \eqref{main equs}--\eqref{BC}. For this purpose, we derive in Appendix A an approximation formula for the center manifold function adapted to the specific structure of the underlying system, and use it to prove the main theorem in this section.

Let $E_1$ be the finite dimensional vector space spanned by the first critical eigenvectors, and $E_2$ be its orthogonal complement as as defined in \eqref{E1E2}. Then any $\Phi \in E_1$ can be written as
\begin{equation} \label{center part 2}
  \Psi = z(t) \phi_1 + \overline{z(t) \phi_1} \in E_1.
\end{equation}
Now we will utilize \eqref{new cm formula} to approximate the center manifold. First notice that
\begin{equation}\label{pr1}
  \ip{G(\phi_1, \phi_1), e^{-i \alpha \cdot x} \varphi(z)} = 0, \qquad \text{if } \alpha \ne 2 \alpha_c = (2 m_x^c, 2m_y^c)
\end{equation}
due to the orthogonality $\int_0^{L_x}\int_0^{L_y} e^{i \alpha_c \cdot x} e^{i \alpha_c \cdot x} e^{-i \alpha \cdot x} dy dx = 0$ if $\alpha \ne 2 \alpha_c$. Similarly
\begin{equation}\label{pr2}
  \ip{G(\phi_1, \overline{\phi_1}), e^{-i \alpha \cdot x} f(z)} = 0, \qquad \text{if } \alpha \ne (0, 0).
\end{equation}
due to the orthogonality $\int_0^{L_x}\int_0^{L_y} e^{i \alpha_c \cdot x} e^{-i \alpha_c \cdot x} e^{-i \alpha \cdot x} dy dx = 0$ if $\alpha \ne (0, 0)$.
By \eqref{pr1} and \eqref{pr2}, we find that
\begin{equation}\label{pr3}
  P_2 G(\phi_1, \phi_1) = G(\phi_1, \phi_1), \qquad
  P_2 G(\phi_1, \overline{\phi_1}) = G(\phi_1, \overline{\phi_1}),
\end{equation}
where $P_2$ is the projection onto $E_2$. Moreover, \eqref{pr1} and \eqref{pr2} also imply that
\begin{equation}\label{pr4}
  \ip{G(\phi_1, \phi_1), \phi_{m_x, m_y, m_z}} = 0, \qquad \text{if } (m_x, m_y) \ne (2m_x^c, 2m_y^c).
\end{equation}
and
\begin{equation}\label{pr5}
  \ip{G(\phi_1, \overline{\phi_1}), \phi_{m_x, m_y, m_z}} = 0, \qquad \text{if } (m_x, m_y) \ne (0, 0).
\end{equation}
Now using \eqref{pr3}, \eqref{pr4} and \eqref{pr4} in \eqref{new cm formula}, we find that the center manifold has the quadratic approximation
\begin{equation} \label{CM expansion complex}
  \Phi = \sum_{j=1}^{\infty} z^2 \Phi_{2, j} \phi_{2, j} + \abs{z}^2 \Phi_{0, j} \phi_{0, j}+ \overline{z^2 \Phi_{2, j} \phi_{2, j}} + o(2),
\end{equation}
where the coefficients $\Phi_{2, j}$ and $\Phi_{0, j}$ are as defined in \eqref{A0 A2}.
Letting $\phi = \Psi + \Phi$ which amounts to considering the dynamics on the center manifold, then taking projection onto the center-unstable space and finally using $\ip{\overline{\phi_1},\phi_1^{\ast}}=0$,  we get
\begin{equation} \label{pre-reduced}
  \frac{dz}{dt} = \beta_1 z + \frac{1}{\langle \phi_1, \phi_1^{\ast} \rangle} \ip{G(\Psi + \Phi), \phi_1^{\ast}}.
\end{equation}
The nonlinear term in \eqref{pre-reduced} can be written as
\[
  \ip{G(\Psi + \Phi), \phi_1^{\ast}} = \sum_{j=1}^{\infty}
  \ip{G_s(z \phi_1, \Phi_{0, j} \phi_{0, j} \abs{z}^2 ), \phi_1^{\ast}} +
  \ip{G_s(\overline{z \phi_1}, \Phi_{2, j} \phi_{2, j} z^2), \phi_1^{\ast}} + o(3).
\]
Then the equation \eqref{pre-reduced} becomes
\begin{equation} \label{reduced equation real}
  \frac{dz}{dt} = \beta_1 z + A \abs{z}^2 z + o(3),
\end{equation}
where $A$ is defined by \eqref{A}.
In polar coordinates $z(t) = \abs{z}e^{i \gamma}$, \eqref{reduced equation real} is equivalent to
\begin{equation}
\begin{aligned} \label{reduced equation complex polar}
&  \frac{d\abs{z}}{dt} = \Re(\beta_1) \abs{z} + \Re(A) \abs{z}^3 + o(\abs{z}^3), \\
&  \frac{d\gamma}{dt} = \Im(\beta_1) + \Im(A) \abs{z}^2 + o(\abs{z}^3).
\end{aligned}
\end{equation}
To finalize the proof, thanks to the dynamical transition theorems \cite{ptd}, there remains to analyze the stability of the zero solution of \eqref{reduced equation complex polar}.

In the case where $\Im \beta_1 = 0$, if one carries out this reduction procedure using real eigenvectors $e_1 = \Re \phi_1$, $e_2 = \Im \phi_1$ then as all the terms determining the transition number $A$ will be real, it must follow that the coefficient $A$ must also be real. Supposing $\beta_1>0$ and $A<0$, it follows from the first equation in \eqref{reduced equation complex polar} that $\abs{z(t)} \to -\sqrt{\frac{-\beta_1}{A}}$ as $t \to \infty$ and the solutions of \eqref{reduced equation real} will evolve to a circle of steady state solutions.

In the case $\Im \beta_1 \ne 0$, $A$ is non-real, and the equation \eqref{reduced equation real} is the normal form of the classical Hopf bifurcation.
For $\Re(A)<0$, a periodic solution of \eqref{reduced equation complex polar} will bifurcate on $\Re \beta_1>0$ given by \eqref{ode bifurcated solution}.
If $\Re(A)>0$ then the system will evolve to an attractor far away from $z=0$ on $\Re \beta_1>0$.

\section{Numerical Strategy}
\subsection{Numerical Solution of the Eigenvalue Problem}
To solve \eqref{ode eig prob 1}--\eqref{ode eig prob 4}, we use the spectral Legendre Galerkin method. For this, we first make the change of variable $\tilde{z} = 2z - 1$, which transforms the z-domain from $[0,1]$ to $[-1,1]$ and the derivative  $D = \frac{d}{dz} = 2 \frac{d}{d \tilde{z}} = 2 \tilde{D}$. Next, we discretize the problem using the expansions
\begin{equation} \label{expansions}
  C = \sum_{i=0}^{N-1} \widehat{C}_i e^C_i(\tilde{z}), \qquad
  W = \sum_{i=0}^{N-1} \widehat{W}_i e^W_i(\tilde{z}), \qquad
  R = \sum_{i=0}^{N-1} \widehat{R}_i e^R_i(\tilde{z}).
\end{equation}

The basis functions are the generalized Jacobi polynomials (compact combinations of Legendre polynomials) given by
\[
  \begin{aligned}
    & e^C_i(\tilde{z}) = e^R_i(\tilde{z}) = L_i(\tilde{z}) - \frac{i(1+i)}{(2+i)(3+i)} L_{i+2}(\tilde{z}), \\
    & e^W_i(\tilde{z}) = L_i(\tilde{z}) + a_i L_{i+2}(\tilde{z}) + b_i L_{i+4}(\tilde{z}), \\
    & a_i = -\frac{2(5+2i)(9+5i+i^2)}{(3+i)(4+i)(7+2i)}, \qquad
    b_i = -1 - a_i.
  \end{aligned}
\]
The basis functions satisfy the boundary conditions
\[
  e^C_i(\tilde{z}) = e^W_i(\tilde{z}) = D^2e^W_i(\tilde{z}) = e^R_i(\tilde{z}) = 0, \quad \tilde{z}=-1,1.
\]

The discretization of the linear eigenvalue problem is done by plugging the expansions \eqref{expansions} in \eqref{ode eig prob 1}--\eqref{ode eig prob 4} and taking the inner products with basis functions. Then the resulting finite dimensional matrix eigenvalue problem is solved.

\subsection{The Numerical Computation of the Transition Number}
We now give a more detailed form of the transition number $A$ given by \eqref{A} and described by the equations \eqref{A0 A2}. We represent the eigenvectors
\begin{align*}
& \phi_1 = e^{i \alpha_c \cdot x} \psi_1(z), \qquad && \psi_1(z) = [u_1, v_1, w_1, r_1]^T, \\
& \phi_{2,m_z} = e^{2 i \alpha_c \cdot x} \psi_{2,m_z}(z), && \psi_{2,m_z}(z) = [u_{2,m_z}, v_{2,m_z}, w_{2,m_z}, r_{2,m_z}]^T, \\
& \phi_{0,m_z} = e^{2 i \alpha_c \cdot x} \psi_{0,m_z}(z), && \psi_{0,m_z}(z) = [u_{0,m_z}, v_{0,m_z}, w_{0,m_z}, r_{0,m_z}]^T,
\end{align*}
and
\[
  \phi_1^* = e^{i \alpha_c \cdot x} \psi_1^*(z), \qquad
  \phi_{0, m_z}^* = e^{i \alpha_c \cdot x} \psi_{0, m_z}^*(z), \qquad
  \phi_{2, m_z}^* = e^{i \alpha_c \cdot x} \psi_{2, m_z}^*(z).
\]
By using
\[
  i \alpha_x u_1 + i \alpha_y v_1 + D w_1 = 0, \qquad 2i \alpha_x u_{2, m_z} + 2i \alpha_y v_{2, m_z} + Dw_{2, m_z} = 0,
\]
the expressions in \eqref{A0 A2} can be written as
\begin{equation} \label{pre expanded A0A2}
  \begin{aligned}
  & G(\phi_1, \phi_{0,m_z}) = e^{i \alpha_c \cdot x} (u_1 \partial_x + v_1 \partial_y + w_1 D) \phi_{0,m_z} = e^{i \alpha_c \cdot x} w_1 D \psi_{0,m_z}, \\
  & G(\phi_{0,m_z}, \phi_1) = (u_{0,m_z} \partial_x + v_{0,m_z} \partial_y) \phi_1 = e^{i \alpha_c \cdot x} (i \alpha_x u_{0,m_z} + i \alpha_y v_{0,m_z} \partial_y) \psi_1, \\
  & G(\overline{\phi_1}, \phi_{2, m_z}) = e^{i \alpha_c \cdot x} (2 D \overline{w_1} + \overline{w_1} D) \psi_{2, m_z}, \quad
  G(\phi_{2, m_z}, \overline{\phi_1}) = e^{i \alpha_c \cdot x} (\frac{1}{2} D w_{2, m_z} + w_{2, m_z} D) \overline{\psi_{1}}, \\
  & G(\phi_1, \overline{\phi_1}) = (Dw_1 + w_1 D) \overline{\psi_1}, \quad
  G(\phi_1, \phi_1) = e^{2 i \alpha_c \cdot x} (-Dw_1 + w_1 D) \psi_1.
  \end{aligned}
\end{equation}
Hence the transition number becomes
\begin{equation} \label{A0 A2 expanded}
  \begin{aligned}
    & A_{0, m_z} = - \frac{\Phi_{0,m_z}}{\int_0^1 dz \psi_1 \cdot \overline{\psi_1^{\ast}}} \int_0^1 dz \left( w_1 D\psi_{0,m_z} + \left( i \alpha_x u_{0,m_z} + i \alpha_y v_{0,m_z} \right) \psi_1 \right) \cdot \overline{\psi_{1}^{\ast}}, \\
    & A_{2, m_z} = - \frac{\Phi_{2, m_z}}{\int_0^1 dz \psi_1 \cdot \overline{\psi_1^{\ast}}}  \int_0^1 \left( 2D \overline{w_1} \psi_{2, m_z} + \overline{w_1} D \psi_{2, m_z} + \frac{1}{2} Dw_{2, m_z} \overline{\psi_1} + w_{2, m_z} D \overline{\psi_1} \right) \cdot \overline{\psi_1^{\ast}}, \\
    & \Phi_{0,m_z} = \frac{1}{ \beta_{0,m_z} \int_0^1 dz \psi_{0,m_z} \cdot \overline{\psi_{0,m_z}^{\ast}} }  \int_0^1 dz 2 \Re \left\{ \left( Dw_1 + w_1 D \right)\overline{\psi_1} \right\} \cdot \overline{\psi_{0,m_z}^{\ast k}}, \\
    & \Phi_{2, m_z} = \frac{-1}{(2 \beta_1 -\beta_{2, m_z}) \int_0^1 dz \psi_{2,m_z} \cdot \overline{\psi_{2,m_z}^{\ast}}}\int_0^1 \left( -Dw_1 + w_1 D \right)\psi_1 \cdot \overline{\psi_{2,m_z}^{\ast}}.
  \end{aligned}
\end{equation}

The numerical strategy is as follows. Initially we fix all parameters and compute the eigenvalues for the first few wave indices $(m_x^c, m_y^c)$ numerically, and call the minimum of these eigenvalue $\beta_1$. Next we vary one of the parameters, say $\Lambda$, to determine the critical value of $\Lambda_c$, so that $\Re \beta_1(\Lambda_c) = 0$. Then we fix $\Lambda =\Lambda_c$ and find all the eigenpairs corresponding to wave indices $(0,0)$ and $2(m_x^c, m_y^c)$ and compute the transition number using \eqref{A0 A2 expanded}.

\section{Numerical Investigations}
\subsection{Parameters vs Critical Eigenvalue}
According to \eqref{Nonlinear Stability condition}, increasing the parameters $\Fr$, $\Ro$, $\Pr$ has a stabilizing effect while increasing $\Lambda$ has a destabilizing effect on the eigenvalue $\beta_1$ with largest real part. We verify this numerically in \autoref{fig:Fr Ro Pr Lambda vs beta1} where the first real part of the eigenvalue is plotted against the system parameters for fixed wave number $\alpha$. The first and last plots of \autoref{fig:Fr Ro Pr Lambda vs beta1} also show that the eigenvalue of largest real part is always bounded below by $\Re \beta_{0, 0, 1}^k$ ($k=1,2,3$).
\begin{figure}[H]
  \centering
  \begin{tikzpicture}
  \begin{axis}[
      xlabel=$\Fr$,
      ylabel=$\Re (\beta_1)$,
      height=3cm,
      width=4cm]
  \addplot[mark=*, mark size=1] coordinates { (0.08, 20.945) (0.083, 16.847) (0.086, 13.212) (0.089, 9.977) (0.092, 7.09) (0.095, 4.507) (0.098, 2.19) (0.101, 0.108) (0.104, -1.768) (0.107, -3.46) (0.11, -4.988) (0.113, -6.371) (0.116, -7.007) (0.119, -7.007) (0.122, -7.007) (0.125, -7.007) (0.128, -7.007) };
  \end{axis}
  \end{tikzpicture}
  \begin{tikzpicture}
  \begin{axis}[
      xlabel=$\Ro$,
      height=3cm,
      width=4cm]
  \addplot[mark=*, mark size=1] coordinates { (0.1, 135.46) (0.15, 102.447) (0.2, 82.931) (0.25, 69.712) (0.3, 60.023) (0.35, 52.541) (0.4, 46.547) (0.45, 41.61) (0.5, 37.457) (0.55, 33.903) (0.6, 30.819) (0.65, 28.112) (0.7, 25.711) (0.75, 23.565) (0.8, 21.632) (0.85, 19.88) (0.9, 18.282) (0.95, 16.819) (1., 15.473) };
  \end{axis}
  \end{tikzpicture}
  \begin{tikzpicture}
  \begin{axis}[
      xlabel=$\Pr$,
      height=3cm,
      width=4cm]
  \addplot[mark=*, mark size=1] coordinates { (0.1, 20.465) (0.15, 18.354) (0.2, 16.37) (0.25, 14.489) (0.3, 12.699) (0.35, 10.991) (0.4, 9.36) (0.45, 7.802) (0.5, 6.313) (0.55, 4.893) (0.6, 3.537) (0.65, 2.246) (0.7, 1.016) (0.75, -0.153) (0.8, -1.262) (0.85, -2.314) (0.9, -3.309) (0.95, -4.249) (1., -5.137) };
  \end{axis}
  \end{tikzpicture}
  \begin{tikzpicture}
  \begin{axis}[
      xlabel=$\Lambda$,
      height=3cm,
      width=4cm]
  \addplot[mark=*, mark size=1] coordinates { (0.3, -7.007) (0.35, -7.007) (0.4, -7.007) (0.45, -7.007) (0.5, -7.007) (0.55, -7.007) (0.6, -6.703) (0.65, -5.692) (0.7, -4.706) (0.75, -3.742) (0.8, -2.8) (0.85, -1.879) (0.9, -0.976) (0.95, -0.091) (1., 0.778) (1.05, 1.63) (1.1, 2.467) (1.15, 3.291) (1.2, 4.1) };
  \end{axis}
  \end{tikzpicture}
  \caption{The real part of the first eigenvalue when the system parameters are fixed $L_x=2$, $L_y=2$, $\Fr=0.1$, $\Lambda=1$, $\Pr=0.71$, $\Ro=2$ except for one of them which is varied.}
  \label{fig:Fr Ro Pr Lambda vs beta1}
\end{figure}
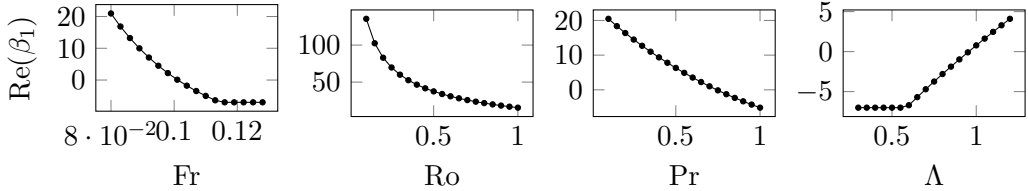

\subsection{Stability Diagrams}
\begin{figure}[H]
\centering
\begin{tikzpicture}
\begin{axis}[
  xlabel=$\Ro$,
  ylabel = $\Lambda$,
  height = 4cm,
  width = 8cm,
  ymode= log,
  xmode=log]
\addplot[blue, mark=*, mark size=1.5] table [] {
Ro  Lambda
0.05  0.451
0.1 0.799
0.15  1.166
0.2 1.536
0.25  1.906
0.3 2.274
0.35  2.64
0.4 3.004
0.45  3.365
0.5 3.724
}; \label{RoVsCritLambda1}
\addplot [red, mark=triangle*, mark size=1.5] table [] {
Ro  Lambda
0.01  0.089
0.015 0.113
0.02  0.145
0.025 0.185
0.03  0.233
0.035 0.29
0.04  0.355
}; \label{RoVsCritLambda2}
\addplot[black, mark=square*, mark size=1.5] table [] {
Ro  Lambda
0.01  0.001
0.028 0.003
0.046 0.005
0.064 0.007
0.082 0.009
0.1 0.011
1.3 0.144
1.1 0.122
0.9 0.1
0.7 0.077
0.5 0.055
}; \label{RoVsEnergyStabLambda}
\node[rotate=15] at (.04, 1.2) {\tiny Unstable};
\node[rotate=15] at (.08, .1) {\tiny Locally Nonlinearly Stable};
\node[rotate=15] at (.4, .005 ) {\tiny Energy Stable};
\end{axis}
\end{tikzpicture}
\caption{Stability curves for the basic shear flow for $\Pr = 0.71$, $\Fr = 0.2$, $L_x = L_y = 1$. The curve indicates transition to multiple equilibria (\ref{RoVsCritLambda1}), transition to spatio-temporal oscillations (\ref{RoVsCritLambda2}) and energy stability (\ref{RoVsEnergyStabLambda}).}
\label{fig:Ro vs Lambda}
\end{figure}
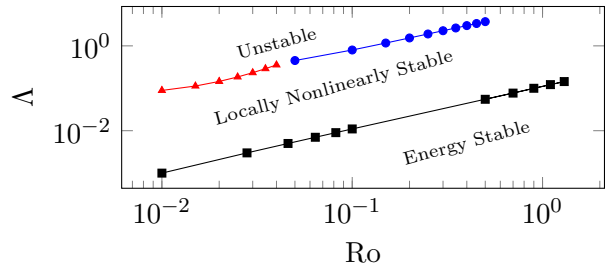

\begin{figure}[H]
  \centering
  \begin{tikzpicture}
  \begin{axis}[
      xlabel=$\Fr$,
      ylabel = $\Lambda$,
      height = 4cm,
      width = 8cm]
  \addplot[blue, mark=*, mark size=1.5] table [] {
    Fr  Lambda
    0.1 0.955
    0.11  1.393
    0.12  1.961
    0.13  2.682
    0.14  3.575
    0.15  4.66
  }; \label{FrVsCritLambda};
  \addplot[black, mark=square*, mark size=1.5] table [] {
    Fr  Lambda
    0.1 0.014
    0.13  0.04
    0.16  0.091
    0.19  0.18
    0.22  0.324
    0.25  0.54
    0.28  0.85
    0.31  1.277
    0.34  1.848
    0.37  2.592
    0.4 3.541
  }; \label{FrVsEnergyStabLambda};
  \node[rotate=20] at (0.1, 3) {\tiny Unstable};
  \node[rotate=20] at (0.25, 2.5) {\tiny Locally Nonlinearly Stable};
  \node[rotate=20] at (0.35, 1) {\tiny Energy Stable};
  \end{axis}
  \end{tikzpicture}
  \caption{Stability curves for the basic shear flow for $\Pr = 0.71$, $\Ro = 2$, $L_x = 1$ and $L_y = 1$. The curve indicates transition to multiple equilibria (\ref{FrVsCritLambda}) and energy stability (\ref{FrVsEnergyStabLambda}).}
  \label{fig:Fr vs Lambda}
\end{figure}

\begin{figure}[H]
  \centering
  \begin{tikzpicture}
  \begin{axis}[
      xlabel=$\Pr$,
      ylabel = $\Lambda$,
      ymax=10,
      height = 4cm,
      width = 8cm,
      ymode= log,
      xmode=log]
  \addplot[blue, mark=*, mark size=1.5] table [] {
    Pr  Lambda
    0.1 1.894
    0.11  2.081
    0.12  2.269
    0.13  2.456
    0.14  2.644
    0.15  2.832
    0.16  3.02
    0.17  3.208
    0.18  3.396
    0.19  3.584
    0.2 3.772
  }; \label{PrVsCritLambda}
  \addplot[black, mark=square*, mark size=1.5] table [] {
    Pr  Lambda
    0.1 0.031
    0.14  0.044
    0.18  0.056
    0.22  0.069
    0.26  0.081
    0.3 0.094
    0.34  0.106
    0.38  0.118
    0.42  0.131
    0.46  0.143
    0.5 0.156
  }; \label{PrVsEnergyStabLambda};
  \node[rotate=5] at (0.12, 5) {\tiny Unstable};
  \node[rotate=5] at (.2, .6) {\tiny Locally Nonlinearly Stable};
  \node[rotate=5] at (.3, .05) {\tiny Energy Stable};
  \end{axis}
  \end{tikzpicture}
  \caption{Stability curves for the basic shear flow for $\Fr = 0.2$, $\Ro = 2$, $L_x = L_y = 1$. The curve indicates transition to multiple equilibria (\ref{PrVsCritLambda}) and energy stability (\ref{PrVsEnergyStabLambda}).}
  \label{fig:Pr vs Lambda}
\end{figure}
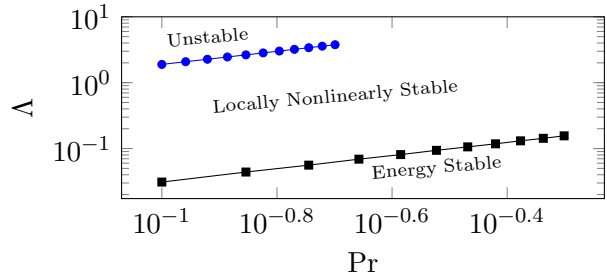

\autoref{fig:Ro vs Lambda} and \autoref{fig:Fr vs Lambda} show several stability regions in the parameter space. In both figures, a cubic basin $L_x=1$, $L_y=1$ is chosen and the Prandtl number is fixed at $\Pr=0.71$.

\autoref{fig:Ro vs Lambda} investigates the stability of the system in the $\Ro$-$\Lambda$ plane. The figure shows that the system has energy stability for small $\Lambda$. As $\Lambda$ increases, the system becomes nonlinearly stable and finally a dynamic transition occurs, either to spatiotemporal oscillations or to multiple equilibria depending on $\Ro$, as indicated in \autoref{main thm}. The curves leading to different transitions seem to intersect at a single point, where the reduced system will become four dimensional.

\autoref{fig:Fr vs Lambda} investigates the stability of the system in the $\Fr$-$\Lambda$ plane. The figure shows that once again the system has energy stability for small $\Lambda$. As $\Lambda$ increases, the system becomes nonlinearly stable and finally a dynamic transition occurs. In this case we find that the system exhibits only transitions to multiple equilibria depending on $\Ro$.

\subsection{The Transition Number}
We demonstrate some numerical details, namely the coefficients of the center manifold function and the nonlinear interactions contributing to the transition number in \autoref{tab: Numerics table}. For the given parameters in \autoref{tab: Numerics table}, the critical value $\Lambda_c$ and the critical wave indices $(m_x^c, m_y^c)$ are found. The critical eigenvalues are real, and the transition number is negative. According to our main theorem, \autoref{main thm}, the system bifurcates from the basic shear flow to  a stable circle of steady states. In \autoref{tab: Numerics table}, we note that some of the coefficients of the center manifold is zero due to the even-odd symmetries of the eigenvectors, canceling some of the integrations.

\begin{table}[tb]
  \caption{Real and imaginary parts of the numerical values for the center manifold and transition number evaluated for the parameter values $L_x = 1$, $L_y=1$, $\Fr=0.2$, $\Pr=0.71$, $\Ro=.1$, $\Lambda = \Lambda_c = 0.799$ and $(m_x^c, m_y^c) = (0, 1)$. The transition number is $A = \sum_{m_z=1}^{6} A_{0,m_z} + A_{2, m_z} = -0.0147 + 0.0021  = -0.0126$.}
  \label{tab: Numerics table}
  \centering
  \pgfplotstabletypeset[col sep=space,
  precision=0,
  every head row/.style={before row=\hline,after row=\hline}, 
  every last row/.style={after row=\hline}, 
  columns/0/.style={column name=$m_z$},
  columns/1/.style={column name=$\Re \Phi_{0,m_z}$},
  columns/2/.style={column name=$\Im \Phi_{0,m_z}$},
  columns/3/.style={column name=$\Re \Phi_{2,m_z}$},
  columns/4/.style={column name=$\Im \Phi_{2,m_z}$},
  columns/5/.style={column name=$\Re A_{0,m_z}$},
  columns/6/.style={column name=$\Im A_{0,m_z}$},
  columns/7/.style={column name=$\Re A_{2,m_z}$},
  columns/8/.style={column name=$\Im A_{2,m_z}$}
  ]{
    1 -1.1551957853570755e-17 1.3426311029812396e-17  -0.029433562364305735 0.0012450936241121657 7.629910466844947e-19 -3.556255648102057e-18  0.0024212113377004786 7.090682208055199e-17
    2 -1.1551957853570756e-17 -1.3426311029812397e-17 -0.000561085989210703 0.00009717387697555772  -7.629910466844964e-19  -3.556255648102061e-18  0.00004819463923795705  1.5246593050577406e-19
    3 4.2995719674040285e-17  0.  0.000013153308953452463 -3.670329838553366e-6 -5.007417855406813e-33  -1.7533717697813653e-18 -6.482940285466235e-7 2.093812506053427e-18
    4 -0.015620408674424192 0.001071509965170519  -0.0049867223978039875  -0.0027109837070407395  -0.011779468146849003 -0.0011365846879320199  -0.00018910627504714427 -0.0001964907311204311
    5 -0.015620408674424194 -0.00107150996517052  0.0053678172753883905 0.0018448226556070696 -0.01177946814684902  0.0011365846879320225 -0.00018910627504711115 0.00019649073112047873
    6 0.01364096589263654 0.  -0.0001241112755316855  0.000472808684054187  0.00884778229557009 -4.336808689942018e-18  -0.000013094569128293718  2.530087413448595e-18
  }
\end{table}

In \autoref{fig: Ro Fr vs A}, the values of transition number $A$ is shown in a parameter regime. In the case of very low $\Ro$ numbers it is seen from the first plot that the transition is to spatiotemporal oscillations. Also observed in this case is that the real part of the transition number admits both negative and positive values indicating both continuous and catastrophic  transitions. As the $\Ro$ number increases, the transition becomes a continuous one to multiple steady states. In the second plot of \autoref{fig: Ro Fr vs A} we test the transition number against the Froude number and find that the transition is continuous and is to multiple steady states.
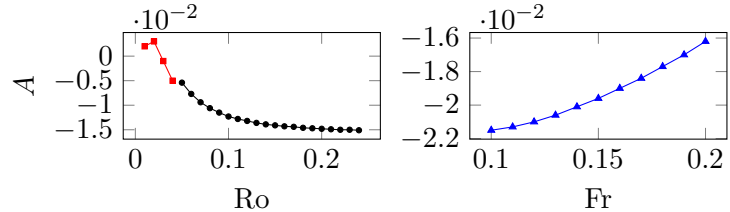
\begin{figure}[H]
  \centering
  \begin{tikzpicture}
  \begin{axis}[
      xlabel=$\Ro$,
      ylabel=$A$,
      height=3cm,
      width=5cm]
  \addplot[red, mark=square*, mark size=1] coordinates { (0.01, 0.002) (0.02, 0.003) (0.03, -0.001) (0.04, -0.005) }; \label{plRovsA1}
  \addplot[mark=*, mark size=1] coordinates { (0.05, -0.0054) (0.06, -0.0077) (0.07, -0.0094) (0.08, -0.0106) (0.09, -0.0115) (0.1, -0.0123) (0.11, -0.0128) (0.12, -0.0132) (0.13, -0.0136) (0.14, -0.0139) (0.15, -0.0141) (0.16, -0.0143) (0.17, -0.0144) (0.18, -0.0146) (0.19, -0.0147) (0.2, -0.0148) (0.21, -0.0149) (0.22, -0.015) (0.23, -0.015) (0.24, -0.0151) }; \label{plRovsA2}
  \end{axis}
  \end{tikzpicture}
  \begin{tikzpicture}
  \begin{axis}[
      xlabel=$\Fr$,
      height=3cm,
      width=5cm]
  \addplot[blue, mark=triangle*, mark size=1.5] coordinates { (0.1, -0.0215) (0.11, -0.0213) (0.12, -0.021) (0.13, -0.0206) (0.14, -0.0201) (0.15, -0.0196) (0.16, -0.019) (0.17, -0.0184) (0.18, -0.0177) (0.19, -0.017) (0.2, -0.0162) }; \label{plFrvsA}
  \end{axis}
  \end{tikzpicture}
  \caption{The transition number $A$ at fixed $L_x=1$, $L_y=1$, $\Fr=0.2$, $\Lambda=\Lambda_c$, $\Pr=0.71$, $Ro=2$ except for the one which is varied. The curve (\ref{plRovsA1}) shows the real part of the transition number where the transition is to spatio-temporal oscillations. In (\ref{plRovsA2}) and (\ref{plFrvsA}), the transition is to multiple steady states.}
  \label{fig: Ro Fr vs A}
\end{figure}

\subsection{Critical Eigenvectors}
\autoref{fig: critical eigenvectors} displays the real and imaginary parts of the $z$-dependence of the velocity and density profiles of the first critical eigenvector for the given parameter values. The wave index of this critical mode is found to be $(m_x^c, m_y^c) = (0, 1)$. This is a roll pattern parallel to the x-axis. The skewed roll structure of the first critical eigenmodes can be seen in \autoref{fig:contourplotdensityvsVW}.
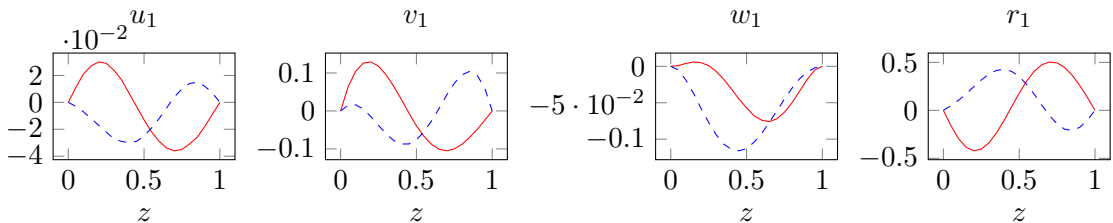
\begin{figure}[H]
  \centering
  \begin{tikzpicture}
  \begin{axis}[
      xlabel=$z$,
      height=3cm,
      width=4cm,
      title=$u_1$]
  \addplot[red] coordinates { (0., 0.) (0.05, 0.011) (0.1, 0.021) (0.15, 0.027) (0.2, 0.03) (0.25, 0.029) (0.3, 0.024) (0.35, 0.017) (0.4, 0.007) (0.45, -0.003) (0.5, -0.013) (0.55, -0.022) (0.6, -0.029) (0.65, -0.034) (0.7, -0.036) (0.75, -0.035) (0.8, -0.031) (0.85, -0.026) (0.9, -0.018) (0.95, -0.009) (1., 0.) }; \label{eigReuc}
  \addplot[blue, dashed] coordinates { (0., 0.) (0.05, -0.003) (0.1, -0.007) (0.15, -0.012) (0.2, -0.017) (0.25, -0.022) (0.3, -0.027) (0.35, -0.029) (0.4, -0.03) (0.45, -0.029) (0.5, -0.025) (0.55, -0.019) (0.6, -0.012) (0.65, -0.004) (0.7, 0.004) (0.75, 0.01) (0.8, 0.014) (0.85, 0.015) (0.9, 0.012) (0.95, 0.007) (1., 0.) }; \label{eigImuc}
  \end{axis}
  \end{tikzpicture}
  \begin{tikzpicture}
  \begin{axis}[
      xlabel=$z$,
      height=3cm,
      width=4cm,
      title=$v_1$]
  \addplot[red] coordinates { (0., 0.) (0.05, 0.064) (0.1, 0.105) (0.15, 0.126) (0.2, 0.129) (0.25, 0.118) (0.3, 0.095) (0.35, 0.065) (0.4, 0.031) (0.45, -0.005) (0.5, -0.039) (0.55, -0.067) (0.6, -0.088) (0.65, -0.101) (0.7, -0.105) (0.75, -0.1) (0.8, -0.088) (0.85, -0.07) (0.9, -0.049) (0.95, -0.025) (1., 0.) }; \label{eigRevc}
  \addplot[blue, dashed] coordinates { (0., 0.) (0.05, 0.015) (0.1, 0.016) (0.15, 0.005) (0.2, -0.015) (0.25, -0.038) (0.3, -0.06) (0.35, -0.078) (0.4, -0.087) (0.45, -0.087) (0.5, -0.076) (0.55, -0.055) (0.6, -0.026) (0.65, 0.007) (0.7, 0.042) (0.75, 0.073) (0.8, 0.095) (0.85, 0.104) (0.9, 0.094) (0.95, 0.061) (1., 0.) }; \label{eigImvc}
  \end{axis}
  \end{tikzpicture}
  \begin{tikzpicture}
  \begin{axis}[
      xlabel=$z$,
      height=3cm,
      width=4cm,
      title=$w_1$]
  \addplot[red] coordinates { (0., 0.) (0.05, 0.001) (0.1, 0.004) (0.15, 0.006) (0.2, 0.005) (0.25, 0.001) (0.3, -0.007) (0.35, -0.018) (0.4, -0.031) (0.45, -0.045) (0.5, -0.057) (0.55, -0.068) (0.6, -0.074) (0.65, -0.076) (0.7, -0.072) (0.75, -0.063) (0.8, -0.05) (0.85, -0.034) (0.9, -0.018) (0.95, -0.005) (1., 0.) }; \label{eigRewc}
  \addplot[blue, dashed] coordinates { (0., 0.) (0.05, -0.005) (0.1, -0.019) (0.15, -0.037) (0.2, -0.058) (0.25, -0.077) (0.3, -0.094) (0.35, -0.107) (0.4, -0.114) (0.45, -0.116) (0.5, -0.113) (0.55, -0.104) (0.6, -0.092) (0.65, -0.077) (0.7, -0.061) (0.75, -0.045) (0.8, -0.03) (0.85, -0.017) (0.9, -0.008) (0.95, -0.002) (1., 0.) }; \label{eigImwc}
  \end{axis}
  \end{tikzpicture}
  \begin{tikzpicture}
  \begin{axis}[
      xlabel=$z$,
      height=3cm,
      width=4cm,
      title=$r_1$]
  \addplot[red] coordinates { (0., 0.) (0.05, -0.161) (0.1, -0.295) (0.15, -0.385) (0.2, -0.422) (0.25, -0.405) (0.3, -0.34) (0.35, -0.236) (0.4, -0.106) (0.45, 0.037) (0.5, 0.179) (0.55, 0.306) (0.6, 0.408) (0.65, 0.475) (0.7, 0.504) (0.75, 0.494) (0.8, 0.446) (0.85, 0.365) (0.9, 0.258) (0.95, 0.134) (1., 0.) }; \label{eigRerc}
  \addplot[blue, dashed] coordinates { (0., 0.) (0.05, 0.048) (0.1, 0.105) (0.15, 0.172) (0.2, 0.245) (0.25, 0.316) (0.3, 0.376) (0.35, 0.416) (0.4, 0.427) (0.45, 0.406) (0.5, 0.352) (0.55, 0.269) (0.6, 0.166) (0.65, 0.053) (0.7, -0.054) (0.75, -0.142) (0.8, -0.197) (0.85, -0.21) (0.9, -0.177) (0.95, -0.102) (1., 0.) }; \label{eigImrc}
  \end{axis}
  \end{tikzpicture}
  \caption{The real part (\ref{eigRerc}) and the imaginary part (\ref{eigImrc}) of the z dependence of the density and the velocity profiles of the first critical eigenvector at $L_x = L_y =1$, $\Fr = 0.2$, $\Pr = 0.71$, $\Ro = 0.1$, $\Lambda = \Lambda_c$. Here $(m_x^c, m_y^c) = (0, 1)$.}
  \label{fig: critical eigenvectors}
\end{figure}

\begin{figure}[tb]
  \centering
  \includegraphics[scale=1]{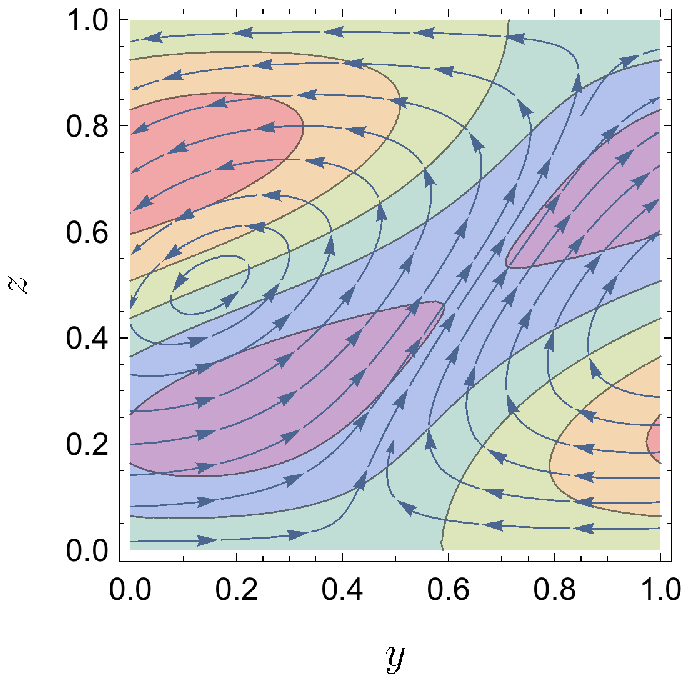}
  \includegraphics[scale=1]{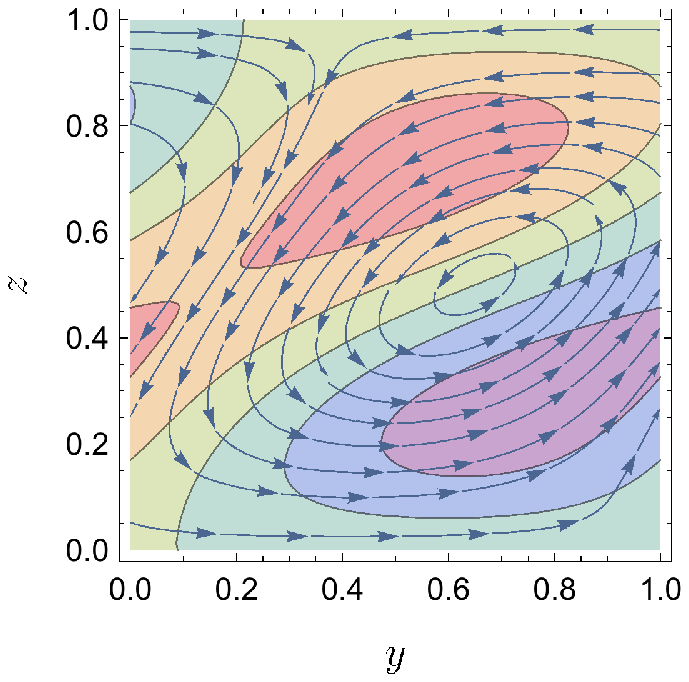}
  \caption{The real and imaginary parts of $(v_1, w_1)$-vector field on top of the contour plot of the critical density field in the $yz$-plane for $L_x = L_y =1$, $\Fr = 0.2$, $\Pr = 0.71$, $\Ro = 0.1$, $\Lambda = \Lambda_c$. Here $(m_x^c, m_y^c) = (0, 1)$ .}
  \label{fig:contourplotdensityvsVW}
\end{figure}

\subsection{Critical Wave Number Selection and Pattern Formation}
In this section, we give some remarks on the selection of the critical wave number and its implications on the pattern formation.

We fix the parameters as $\Fr=0.2$, $\Pr=0.71$ and $\Ro=0.1$ and obtain \autoref{fig:Wave Number Selection} describing the selection of wave indices for the critical eigenmodes.

From \autoref{fig:Wave Number Selection} at the indicated parameters, we observe that regardless of the horizontal length scale $L_x$, the wave indices are given by $(0,1)$ if $L_y < 1.74$, $(0,2)$ if $1.74<L_y<2.96$, $(0,3)$ if $2.96<L_y<4.17$. We observed the similar behavior for the selection of the critical wave indices in our other numerical experiments.

Thus the first critical modes always consist of $m_y$-rolls aligned with the x-axis irrespective of $L_x$. Moreover $m_y$, the number of rolls increases as $L_y$ is increased.

It is interesting that the system does not seem to admit rectangle modes (i.e. $(m_x, m_y)$ with $m_x\ne0$ and $m_y \ne 0)$) or rolls parallel to the y-axis.

\begin{figure}[H]
  \centering
  \begin{tikzpicture}
    \begin{axis}[scatter/classes={
      m01={mark=*, blue},%
      m02={mark=square*, red},%
      m03={mark=triangle*, orange} },
      width=7cm,
      height=3.5cm,
      xlabel=$L_x$,
      ylabel=$L_y$,
      legend pos=outer north east
      ]
      \addplot+[only marks, mark size=1.5, mark options={solid}, scatter, scatter src=explicit symbolic] table[x=Lx,y=Ly, meta=mxmy] {
        Lx  Ly  mxmy
        0.1 0.5 m01
        0.5 0.5 m01
        1.  0.5 m01
        2.  0.5 m01
        3.  0.5 m01
        0.1 1.  m01
        0.5 1.  m01
        1.  1.  m01
        2.  1.  m01
        3.  1.  m01
        0.1 1.7 m01
        0.5 1.7 m01
        1.  1.7 m01
        2.  1.7 m01
        3.  1.7 m01
        0.1 1.9 m02
        0.5 1.9 m02
        1.  1.9 m02
        2.  1.9 m02
        3.  1.9 m02
        0.1 3.  m02
        0.5 3.  m02
        1.  3.  m02
        2.  3.  m02
        3.  3.  m02
        0.1 3.2 m03
        0.5 3.2 m03
        1.  3.2 m03
        2.  3.2 m03
        3.  3.2 m03
      };
      \legend{ (0,1), (0,2), (0,3)}
    \end{axis}
  \end{tikzpicture}
  \caption{The selection of the critical wave number at $\Fr=0.2$, $\Pr=0.71$ and $\Ro=0.1$.}
  \label{fig:Wave Number Selection}
\end{figure}
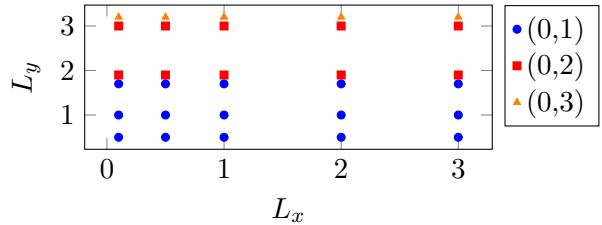

\section{Conclusions}
We find a criterion for the energy stability of the system in terms of system parameters.

By studying the linear stability of the problem, we find a criterion for the local nonlinear stability of the system. Our numerical investigations show the existence of a hypersurface $\Lambda = \Lambda_c$ which separates the parameter space into regions where the basic shear flow is stable and unstable. More explicit   expression for $\Lambda_c$ is not given in this article, and will be addressed  in a later study.

Our numerical investigations also yield that selection of horizontal wave indices at fixed $\Fr$, $\Pr$, $\Ro$ is determined by only $L_y$ and is independent of $L_x$. We find that the system admits only critical eigenmodes with horizontal wave indices $(0,m_y)$. Such modes, horizontally have the pattern consisting of $m_y$-rolls aligned with the x-axis.

We show under the assumption of critical crossing of eigenvalues that the system exhibits dynamic transitions to both multiple steady states and spatiotemporal oscillations. In both cases, the transition can be either continuous, represented by local attractors close to the basic shear flow, or catastrophic, represented  by local attractors far away from the basic shear flow. Numerically we encountered continuous transitions to multiple steady states, continuous and catastrophic transitions to spatiotemporal oscillations.

Another contribution of  this paper is the derivation of a representation of the center manifold approximation in the case of a pair of complex conjugate eigenvalues, adapting similar results \cite{ptd} to the present case. This new representation allows to write a unified transition number covering both the double real eigenvalue and a pair of complex conjugate eigenvalue cases.

\appendix
\section{Approximation of Center Manifold Functions}\label{s11}
In this section we derive a new representation for the approximation of the center manifold function suitable for studying problems in spatial domains with periodicity condition in at least one direction. General approximate formulas for center manifold functions  were first introduced \cite{b-book, MW05c, ptd}, and have played central role in a  wide range of applications of the dynamic transition theory.

We follow the same framework  as in \cite{ptd}. Let $X$ and $X_{1}$ be two Banach spaces and $X_{1}\subset X$ a compact and dense inclusion. Consider the following evolution equation
\begin{equation} \label{CH0:main}
\begin{aligned}
& \frac{d \phi}{dt} =L_{\lambda }\phi+G\left( \phi,\lambda \right) ,\\
& \phi\left( 0\right) =\varphi,
\end{aligned}
\end{equation}
where $\phi$ is the unknown function and $\lambda \in \mathbb{R}$ is the
parameter.

We assume that $L_{\lambda }=-A+B_{\lambda }:X_{1}\rightarrow X$ is a linear completely
continuous field, where $A:X_{1}\rightarrow X$ is a linear homeomorphism and $B_{\lambda
}:X_{1}\rightarrow X$ is a linear compact operator. Furthermore we assume
that $L_{\lambda }$ is a sectorial operator depending continuously on $
\lambda $. In this case, we can define the fractional order spaces $
X_{\alpha}=D\left( L_{\lambda }^{\alpha }\right) $ for $\alpha \in \mathbb{R}$.
We also assume that $G\left( \cdot ,\lambda \right) :X_{\theta }\rightarrow
X $ is $C^{r}$ $\left( r\geq 1\right) $ bounded mapping for some $0\leq
\theta <1$, depending continuously on $\lambda \in \mathbb{R}$ and
\begin{equation}\label{CH0:Theta in CMT}
G\left( \phi,\lambda \right) =o\left( \norm{\phi}_{X_{\theta}}\right) \qquad \forall \lambda \in \mathbb{R}.
\end{equation}

We know that for a linear completely continuous field, the spectrum consists of isolated eigenvalues with finite dimensional eigenspaces.
Let $\{\, \beta_i(\lambda) \in \mathbb{C} \:|\: i \in \mathbb{N}\}$ be all eigenvalues of $L_\lambda$ counting multiplicities and $\{\phi_i(\lambda)\:|\: i \in \mathbb{N}\}$ be the corresponding (complex) eigenvectors.
Assume that the following principle of exchange of stabilities (PES) condition holds true:
\begin{equation}
\label{CH0:PES}
\begin{aligned}
& \Re\beta_1 = \Re \beta_2
\begin{cases}
  <0 & \text{ if }  \lambda <\lambda_{0}, \\
  =0 & \text{ if } \lambda =\lambda_{0},  \\
  >0 & \text{ if } \lambda >\lambda_{0},
\end{cases} \\
& \Re\beta _{j}\left( \lambda _{0}\right) <0 && \text{ for }j\geq 3.
\end{aligned}
\end{equation}
Assume the eigenpairs satisfy
\[
  \beta_1 = \overline{\beta_2}, \qquad \phi_1 = \overline{\phi_2}.
\]
By the spectral theorem \cite{b-book, ptd}, the spaces $X_1$ and $X$ can be decomposed into the direct sum
\begin{equation}\label{E1E2}
  X_1=E_1^{\lambda} \oplus E_2^{\lambda}, \quad X = E_1^{\lambda} \oplus \overline{E_2^{\lambda}},
\end{equation}
where
\begin{equation*}
\begin{aligned}
E_1^\lambda &=\text{span}\{z \phi_1(\lambda) + \overline{z \phi_1(\lambda)} \mid z \in \mathbb{C}\},\\
E_2^\lambda &= \text{the complement of } E_1^\lambda \text{ in } X.
\end{aligned}
\end{equation*}
Then $L_\lambda$ is invariant on $E_1^\lambda$ and $E_2^\lambda$, i.e., $L_\lambda$ can be decomposed as
\begin{equation} \label{decomp L}
\begin{aligned}
&L_\lambda = \mathcal{J}_\lambda \oplus \mathcal{L}_{\lambda}, \\
&\mathcal{J}_\lambda: E_1^\lambda \rightarrow E_1^\lambda, \\
&\mathcal{L}_{\lambda}: X_1\cap E_2^\lambda
\rightarrow E_2^\lambda.
\end{aligned}
\end{equation}
Finally let $P_i$ be the projection onto $E_i^{\lambda}$. Below we omit $\lambda$'s to simplify the notation.
\begin{lemma}
Assume $G$ defined in \eqref{CH0:main} is a bilinear operator. Let
\[
  x(t) = z(t) \phi_1 + \overline{z(t) \phi_1} \in E_1.
\]
Under the above assumptions, for $\lambda$ sufficiently close to $\lambda_0$ we have the following approximation for
the center manifold function
\begin{equation} \label{new cm formula}
  \Phi(x(t), \lambda) = (2 \beta_1 - \mathcal{L}_{\lambda})^{-1} P_2G (z \phi_1, z\phi_1) + (-\mathcal{L}_{\lambda})^{-1} P_2G(z \phi_1, \overline{z \phi_1}) + {\rm o}(2) + \text{ c.c.},
\end{equation}
where $\text{c.c.}$ stands for the complex conjugate of the whole expression coming before and ${\rm o}(k)$ stands for
\begin{equation} \label{little o}
{\rm o}(k):={\rm o}(\norm{x}_{X_1}^k)+{\rm O}( \abs{\Re \beta_1(\lambda)} \norm{x}_{X_1}^k).
\end{equation}
\end{lemma}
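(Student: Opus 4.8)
The plan is to build on the standard existence and regularity theory for center manifolds developed in \cite{ptd, b-book}: for $\lambda$ near $\lambda_0$ there is a $C^r$ center manifold function $\Phi(\cdot,\lambda)\colon E_1\to E_2$, tangent to $E_1$ at the origin, and since $G$ is bilinear (so that $G(\phi)=G(\phi,\phi)$ is homogeneous of degree two and satisfies \eqref{CH0:Theta in CMT}) one has $\Phi(x,\lambda)={\rm O}(\norm{x}_{X_1}^2)$ with a quadratic leading part. First I would write down the invariance identity characterizing $\Phi$ on $E_1$, namely
\[
  D\Phi(x)\bigl[\mathcal{J}_\lambda x + P_1 G(x+\Phi(x))\bigr] = \mathcal{L}_\lambda\Phi(x) + P_2 G(x+\Phi(x)),
\]
and introduce the reduced coordinate through $x = z\phi_1 + \overline{z\phi_1}$. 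Projecting the reduced equation onto $\phi_1$ (using the adjoint eigenvector, and $\ip{\overline{\phi_1},\phi_1^{\ast}}=0$) shows the flow on $E_1$ is $\dot z = \beta_1 z + {\rm O}(\abs{z}^2)$, hence $\dot{\overline z}=\overline{\beta_1}\,\overline z+{\rm O}(\abs z^2)$.

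Next I would substitute the ansatz $\Phi = z^2\Phi_{20}+z\overline z\,\Phi_{11}+\overline z^2\Phi_{02}+{\rm o}(2)$ into the invariance identity. By bilinearity, $G(x+\Phi)=G(x,x)+{\rm O}(\norm{x}_{X_1}^3)$ with $G(x,x)=z^2 G(\phi_1,\phi_1)+z\overline z\,G_s(\phi_1,\overline{\phi_1})+\overline z^2 G(\overline{\phi_1},\overline{\phi_1})$, where $G_s(\varphi,\psi):=G(\varphi,\psi)+G(\psi,\varphi)$. Matching the coefficients of $z^2$, $z\overline z$, $\overline z^2$ then yields $(2\beta_1-\mathcal{L}_\lambda)\Phi_{20}=P_2G(\phi_1,\phi_1)$, $(2\Re\beta_1-\mathcal{L}_\lambda)\Phi_{11}=P_2G_s(\phi_1,\overline{\phi_1})$, and the $\overline z^2$ equation, which is the complex conjugate of the $z^2$ one since $\mathcal{L}_\lambda$ and $P_2$ are real operators and $\overline{\beta_1}$ is again an eigenvalue, so $\Phi_{02}=\overline{\Phi_{20}}$. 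Because $\mathcal{L}_\lambda$ has spectrum in the open left half-plane with real parts bounded away from $0$, both $\mathcal{L}_\lambda$ and $2\beta_1-\mathcal{L}_\lambda$ are invertible on $E_2$ for $\lambda$ near $\lambda_0$; inverting, reassembling $z^2\Phi_{20}+z\overline z\,\Phi_{11}+\overline z^2\Phi_{02}$, and using $\overline{G(z\phi_1,\overline{z\phi_1})}=G(\overline{z\phi_1},z\phi_1)$ to identify the conjugate pieces reproduces exactly the right-hand side of \eqref{new cm formula}, with ``$+\,{\rm c.c.}$'' supplying the $\overline z^2$ term together with the second half of the $z\overline z$ term.

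The hard part will be the error bookkeeping hidden in ``${\rm o}(2)$'', and there are two distinct sources. First, one must show that the terms discarded from $G(x+\Phi)$, namely $2G_s(x,\Phi)+G(\Phi,\Phi)={\rm O}(\norm{x}_{X_1}^3)$, together with the contribution of $D\Phi(x)\,P_1G(x+\Phi)$, remain of order ${\rm o}(\norm{x}_{X_1}^2)$ after applying the bounded inverses on $E_2$; this is precisely where sectoriality of $L_\lambda$, the fractional-space estimates, and the mapping property $G\colon X_\theta\to X$ enter, and I would carry it out following \cite{ptd}. Second, in the $z\overline z$ coefficient I must replace $(2\Re\beta_1-\mathcal{L}_\lambda)^{-1}$ by $(-\mathcal{L}_\lambda)^{-1}$; the resolvent expansion $(2\Re\beta_1-\mathcal{L}_\lambda)^{-1}=(-\mathcal{L}_\lambda)^{-1}+{\rm O}(\abs{\Re\beta_1})$ shows the error introduced is ${\rm O}(\abs{\Re\beta_1}\norm{x}_{X_1}^2)$, which is exactly the second term in the definition \eqref{little o} of ${\rm o}(2)$. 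Crucially, the resonant modes $z^2,\overline z^2$ must \emph{not} be simplified this way, since $2\beta_1$ need not be small — it is purely imaginary with nonzero imaginary part in the oscillatory case — which is precisely why the shifted resolvent $(2\beta_1-\mathcal{L}_\lambda)^{-1}$ is kept in \eqref{new cm formula}; this is the new feature compared with the standard approximation formula for real eigenvalue crossings. Collecting these estimates completes the argument.
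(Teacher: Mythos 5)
Your proposal is correct, and it reaches \eqref{new cm formula} by a genuinely different route from the paper. The paper starts from the Lyapunov--Perron integral representation $\Phi=\int_{-\infty}^0 e^{-\tau\mathcal{L}_{\lambda}}\rho_{\epsilon}P_2G(e^{\tau\mathcal{J}_{\lambda}}x)\,d\tau+{\rm o}(2)$ (quoted from \cite{ptd}), expands $P_2G(z\phi_1,z\phi_1)$ and $P_2G(z\phi_1,\overline{z\phi_1})$ in the eigenbasis $\{\phi_m\}_{m\ge 3}$, and evaluates the resulting scalar integrals to produce the resolvent factors $\frac{1}{2\beta_1-\beta_m}$ and $\frac{1}{2\Re\beta_1-\beta_m}=-\frac{1}{\beta_m}+{\rm O}(\abs{\Re\beta_1})$; convergence of these integrals is exactly the PES condition, and the ${\rm o}(2)$ error is inherited wholesale from the cited approximation theorem. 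You instead work from the invariance (homological) equation and match the coefficients of $z^2$, $z\overline{z}$, $\overline{z}^2$, which yields the same operator equations $(2\beta_1-\mathcal{L}_{\lambda})\Phi_{20}=P_2G(\phi_1,\phi_1)$ and $(2\Re\beta_1-\mathcal{L}_{\lambda})\Phi_{11}=P_2G_s(\phi_1,\overline{\phi_1})$, followed by the same resolvent expansion in $\Re\beta_1$. Your route makes the origin of the resonant denominators and of the two error sources in \eqref{little o} more transparent, and correctly isolates the one genuinely new point of the lemma (that the $z^2$ mode must keep the shifted resolvent $(2\beta_1-\mathcal{L}_{\lambda})^{-1}$ because $2\beta_1$ is not small in the oscillatory case); the cost is that you must separately argue invertibility of the shifted operators on $E_2$ and redo the error bookkeeping that the paper's approach imports from \cite{ptd}, and your appeal to that reference for the discarded higher-order terms is the appropriate way to close that gap. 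Both arguments are sound and yield the identical formula, including the identification of the ``c.c.'' terms with $\overline{z}^2\Phi_{02}$ and the second half of the $z\overline{z}$ contribution.
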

\begin{proof}
It is known that the center manifold can be approximated by
\begin{equation} \label{comp proof 1}
  \Phi = \int_{-\infty}^0 e^{-\tau \mathcal{L}_{\lambda}} \rho_{\epsilon} P_2 G(e^{\tau \mathcal{J}_{\lambda} } x) d\tau + {\rm o}(k),
\end{equation}
where $\mathcal{J}_{\lambda} = \text{diag}(\beta_1, \overline{\beta_1})$ so that $e^{\tau \mathcal{J}_{\lambda} } x = e^{\tau \beta_1} z \phi_1 + \overline{ e^{\tau \beta_1} z \phi_1}$ with $\rho_{\epsilon}$ denoting a cut-off function. Hence
\[
  G(e^{\tau \mathcal{J}_{\lambda} } x) = e^{2 \tau \beta_1} G(z \phi_1, z \phi_1) + e^{\tau (\beta_1 + \overline{\beta_1})} G(z \phi_1, \overline{z \phi_1}) + \text{c.c.}
\]
Let
\begin{equation}\label{gmhm}
  P_2 G(z \phi_1, z \phi_1) = \sum_{m \ge 3} g_m \phi_m, \qquad
  P_2 G(z \phi_1, \overline{z \phi_1}) = \sum_{m \ge 3} h_m \phi_m.
\end{equation}
Then
\begin{align} \label{comp proof 2}
 \Phi & =  \int_{-\infty}^0 e^{-\tau \mathcal{L}_{\lambda}} \sum_{m \ge 3}(e^{2\tau \beta_1} g_m + e^{\tau (\beta_1 + \bar \beta_1)} h_m)\phi_m d\tau + \text{c.c.} + {\rm o}(2) \\
 &  =  \sum_{m \ge 3} \int_{-\infty}^0 (e^{-\tau (\beta_m - 2\beta_1)} g_m + e^{-\tau (\beta_m - \beta_1 - \bar \beta_1)} h_m)\phi_m d\tau + \text{c.c.} + {\rm o}(2). \nonumber
\end{align}

The following integrals converge by the PES condition, since $\Re(\beta_m - 2\beta_1) = \Re(\beta_m - \beta_1 - \bar \beta_1)< 0$:
\begin{align*}
 &
  \int_{-\infty}^0 e^{-\tau (\beta_m - 2\beta_1)} d \tau = \frac{1}{2\beta_1 - \beta_m}, \\
  &
  \int_{-\infty}^0 e^{-\tau (\beta_m - \beta_1 -\bar \beta_1)} d \tau = \frac{1}{2\Re(\beta_1) - \beta_m} = -\frac{1}{\beta_m} + O(\abs{\Re \beta_1 }).
\end{align*}
Plugging these expressions in \eqref{comp proof 2}, we obtain
\begin{align*}
\Phi = & \sum_{m \ge 3} \left( \frac{1}{2 \beta_1 - \beta_m} g_m - \frac{1}{\beta_m} h_m \right) \phi_m + \text{c.c.} + o(2) \\
= & (2 \beta_1 - \mathcal{L}_{\lambda})^{-1} \sum_{m \ge 3} g_m \phi_m + (- \mathcal{L}_{\lambda})^{-1} \sum_{m \ge 3} h_m \phi_m + \text{c.c.} + o(2),
\end{align*}

Finally using \eqref{gmhm} gives the desired result.
\end{proof}

\begin{remark}
  An alternative way of representing the center manifold approximation given in \cite{ptd} under the above assumptions when $\Im\beta_1 = -\Im \beta_2 = \rho \ne 0$ and $G = G_2$ is bilinear is
  \begin{equation} \label{old cm complex formula}
    \begin{split}
      & ( ( -\mathcal{L}_{\lambda})^2 + 4\rho ^2) ( -\mathcal{L}_{\lambda}) \Phi(x_1, x_2, \lambda)  = \\
      & ( ( -\mathcal{L}_{\lambda}) ^2 + 2\rho ^2) P_2G_2( x_1 \Re(\phi_1) + x_2 \Im(\phi_1)) \\
      & + 2\rho ^{2}P_{2}G_2(x_1\Im(\phi_1)-x_2\Re \phi_1)\\
      & + \rho (-\mathcal{L}_{\lambda}) P_2 G_2( x_1\Re \phi_1+x_2\Im(\phi_1),x_2\Re(\phi_1)-x_1\Im(\phi_1))  \\
      & + \rho (-\mathcal{L}_{\lambda}) P_2 G_2( x_2\Re(\phi_1)-x_1\Im(\phi_1),x_1\Re(\phi_1)+x_2\Im(\phi_1)) +o( 2).
    \end{split}
  \end{equation}
  It should be noted that the new representation \eqref{new cm formula} is less prone to mistakes in center manifold approximation computations than the representation \eqref{old cm complex formula}.
\end{remark}

\begin{remark}
  We note that if $\Im \beta_1 = 0$, the formula \eqref{new cm formula} reduces to the case
\begin{equation} \label{real cm formula}
  -\mathcal{L}_{\lambda} \Phi(x(t), \lambda) = P_2 G_2(x(t)) + o(2),
\end{equation}
which was first derived in \cite{b-book, MW05c}. Thus the formula \eqref{new cm formula} gives a unified approach to the center manifold approximation formulas in the cases of both real and non-real crossing of two critical eigenvalues from the imaginary axis at the critical parameter.
\end{remark}

\begin{remark}
  The formula \eqref{new cm formula} can be expressed in a more compact form as follows
  \begin{equation} \label{new cm formula compact form}
    \Phi(x(t), \lambda) = \sum_{i_1, i_2 = 1}^2 (\beta_{i_1} + \beta_{i_2} - \mathcal{L}_{\lambda})^{-1} P_2 G_2(u_{i_1}, u_{i_2}) + o(2)
  \end{equation}
  where $u_1 = \overline{u_2} = z(t) \phi_1$.

  The above form of the center manifold approximation can be extended to the multi-linear case. That is if $G = G_n$ is an n-linear operator then
  \begin{equation} \label{new cm formula Gn case}
    \Phi(x(t), \lambda) = \sum_{i_1, \dots, i_n = 1}^2 (\beta_{i_1} + \cdots + \beta_{i_n} - \mathcal{L}_{\lambda})^{-1} P_2 G_n(u_{i_1}, \dots, u_{i_n}) + o(n),
  \end{equation}
  where $u_1 = \overline{u_2} = z(t) \phi_1$.


\end{remark}

\bibliographystyle{siam}
\def\cprime{$'$}

\end{document}